\documentclass{llncs}

\usepackage{hyperref}
\usepackage{enumerate} 
\usepackage{url}
\usepackage[pdftex]{graphicx}
\usepackage{amsmath}
 \usepackage{amssymb}
\usepackage{amsfonts}
\usepackage{todonotes}
\usepackage{paralist}
\usepackage{multirow}
\usepackage{caption}
\usepackage{fancyvrb}
\usepackage{lscape}
\usepackage{listings}
\usepackage{pdflscape}
\usepackage{wrapfig}
\usepackage{tikz}
\usetikzlibrary{shapes}
\usetikzlibrary{arrows}
\usetikzlibrary{positioning}
\usetikzlibrary{automata}
\usetikzlibrary{calc}
\lstset{language=C}
\usepackage{subcaption}

\pagestyle{plain}

% general math 

\newcommand{\bigmathset}[2]{\big\{\: #1 \;|\; #2 \:\big\}}
\newcommand{\mset}[1]{\{ #1 \}}
\newcommand{\bigmset}[1]{\big\{ #1 \big\}}

\newcommand{\Ab}{\mathbb{A}}
\newcommand{\Zb}{\mathbb{Z}}
\newcommand{\Nb}{\mathbb{N}}

\newcommand{\ra}{\rightarrow}

\newcommand{\tuple}[1]{\langle #1 \rangle}

% logic
\newcommand{\asort}{\textsf{ASort}}
\newcommand{\isort}{\textsf{ISort}}
\newcommand{\bsort}{\textsf{BSort}}
\newcommand{\vsort}[1]{\textsf{VSort}^#1}
\newcommand{\fsort}[1]{\textsf{FSort}^#1}

\newcommand{\VA}{\var{Var}_A}
\newcommand{\VI}{\var{Var}_I}

\newcommand{\pp}{\texttt{++}}
\newcommand{\pe}{\texttt{+=}}
\newcommand{\mm}{\texttt{--}}

\newcommand{\fand}{\:\land\:}
\newcommand{\for}{\:\lor\:}

\newcommand{\Break}{\ensuremath{\mathit{break}}}
\newcommand{\Skip}{\ensuremath{\mathit{skip}}}

\newcommand{\Fold}{\ensuremath{\mathit{fold}}}

\newcommand{\FoldPI}[3]{\Fold_{#1}\!\bm{#2}\!\bm{#3}}

\newcommand{\model}{\sigma}
\newcommand{\true}{\var{true}}
\newcommand{\false}{\var{false}}

\newcommand{\guard}{\mathit{grd}}

\newcommand{\update}{\mathit{upd}}

\newcommand{\elm}{\mathbf{e}}
\newcommand{\idx}{\mathbf{i}}
\newcommand{\cnt}{\mathbf{c}}
\newcommand{\st}{\mathbf{s}}

\newcommand{\pen}{\,\pe\,n}

\newcommand{\ev}[1]{\left[ #1 \right]^\sigma}
\newcommand{\evk}[1]{\left[ #1 \right]^\kappa}
\newcommand{\evsk}[1]{\left[ #1 \right]^{\sigma,\kappa}}
\newcommand{\evskp}[1]{\left[ #1 \right]^{\sigma,\kappa'}}
\newcommand{\FV}{\var{FV}^m}

% compleixty
\newcommand{\NP}{\mathbf{NP}}

\newcommand{\PSPACE}{\mathbf{PSPACE}}

% automata

\newcommand{\scm}{\mathcal{M}}
\newcommand{\vars}{X}
\newcommand{\ctr}{\eta}
\newcommand{\ctrv}{\vec{\ctr}}
\newcommand{\states}{Q}
\newcommand{\alphabet}{\Sigma}
\newcommand{\cc}{\mathsf{CC}}
\newcommand{\ic}{\mathsf{IC}}
\newcommand{\init}{q^\mathrm{init}}
\newcommand{\tr}{\delta}

\newcommand{\conf}{\zeta}

\newcommand{\invar}{x_\mathbf{e}}

\newcount\colveccount
\newcommand*\colvec[1]{
        \global\colveccount#1
        \begin{pmatrix}
        \colvecnext
}
\def\colvecnext#1{
        #1
        \global\advance\colveccount-1
        \ifnum\colveccount>0
                \\
                \expandafter\colvecnext
        \else
                \end{pmatrix}
        \fi
}

\newcommand{\bm}[1]{\begin{pmatrix}#1\end{pmatrix}}

\makeatletter
\newcommand{\vect}[2][]{%
  \gdef\@VORNE{1}
  \hskip-\arraycolsep%
    \begin{smallmatrix}{#1}\vekSp@lten{#2}\end{smallmatrix}%
  \hskip-\arraycolsep}

\def\vekSp@lten#1{\xvekSp@lten#1;vekL@stLine;}
\def\vekL@stLine{vekL@stLine}
\def\xvekSp@lten#1;{\def\temp{#1}%
  \ifx\temp\vekL@stLine
  \else
    \ifnum\@VORNE=1\gdef\@VORNE{0}
    \else\@arraycr\fi%
    #1%
    \expandafter\xvekSp@lten
  \fi}
\makeatother

% a vertical vector is specified as \Vect{row1;row2;...}
\newcommand{\Vect}[1]{\bm{\vect{#1}}}

\newcommand{\Then}{\;\Rightarrow\;}

\newcommand{\var}[1]{\ensuremath{\mathit{#1}}}

\begin{document}

%\title {Array Folds Logic} % \\for Symbolic Testing and Verification}
 \title{Array Folds Logic\thanks{This research was
     supported in part by the European Research Council (ERC) under
     grant 267989 (QUAREM) and by the Austrian Science Fund (FWF) under
     grants S11402-N23 (RiSE and SHiNE) and Z211-N23 (Wittgenstein Award).}  }

  \author{Przemys\l aw Daca \and
    Thomas A.\ Henzinger \and Andrey Kupriyanov} \institute{IST Austria, Austria}

\date{\today}
\maketitle

\begin{abstract}
%We present an extension to the existential theory of integer arrays
%which allows us to express counting and accumulation. The properties
%expressible in Array Folds Logic (AFL) include statements such as ``the 
%first array cell contains the array length,'' and ``the array contains
%equally many minimal and maximal elements.'' These properties 
%cannot be expressed in quantified fragments of the theory of arrays, 
%nor in the theory of concatenation. Using reduction to counter machines, 
%we show that the satisfiability problem of AFL is NEXP-complete, and 
%under a natural restriction the complexity decreases to NP. We also 
%show that adding either universal quantifiers or concatenation leads to 
%undecidability.

We present an extension to the quantifier-free theory of integer arrays which allows us to express counting. The properties expressible in Array Folds Logic (AFL) include statements such as ``the first array cell contains the array length,'' and ``the array contains equally many minimal and maximal elements.'' These properties cannot be expressed in quantified fragments of the theory of arrays, nor in the theory of concatenation. Using reduction to counter machines, we show that the satisfiability problem of AFL is PSPACE-complete, and with a natural restriction the complexity decreases to NP. We also show that adding either universal quantifiers or concatenation leads to undecidability.

AFL contains terms that fold a function over an array. We demonstrate that folding, a well-known concept from functional languages, allows us to concisely summarize loops that count over arrays, which occurs frequently in real-life programs. We provide a tool that can discharge proof obligations in AFL, and we demonstrate on practical examples that our decision procedure can solve a broad range of problems in symbolic testing and program verification.

\end{abstract}

\section{Introduction}
\label{sec:intro}

Arrays and lists (or, more generally, sequences) are fundamental data structures both for imperative and functional programs: hardly any real-life program can work without processing sequentially-ordered data. Testing and verification of array- and list-manipulating programs is thus a task of crucial importance. Almost any non-trivial property about these data structures requires some sort of universal quantification; unfortunately, the full first-order theories of arrays and lists are undecidable. This has motivated researches to investigate fragments with restricted quantifier prefixes, and has given rise to numerous logics that can describe interesting properties of sequences, such as partitioning or sortedness. These logics have efficient decision procedures and have been successfully applied to verify some important aspects of programs working with arrays and lists: for example, the correctness of sorting algorithms.

However, an important class of properties, namely, \emph{counting} 
%and \emph{accumulation} 
 over arrays, has eluded researchers' attention so far. 
In addition to the examples from the abstract, this includes statements such as 
``the histogram of the input data satisfies the given distribution,''
%``the sequence has the shape $a^n\:b^n\:c^n$ for some numbers $a<b<c<n$,'' 
or 
``the packet adheres to the requirements of the given type-length-value (TLV) encoding (e.g., of the IPv6 options).''
% or ``the average computed over the array is bounded.'' 
%``the average computed over the array is bounded.'' 
%This class includes, besides the statements shown in the abstract, such properties as ``the histogram of the input data satisfies the given distribution,''  or ``the average computed over the array is bounded.'' 
Such properties, though crucial for many applications, cannot be expressed in decidable fragments of the first-order theory of arrays, nor in the decidable extensions of the theory of concatenation. 
%These properties cannot be expressed in quantified fragments of the theory of arrays, nor in the theory of lists with concatenation. At the same time, this properties are crucial for many applications that concern with parsing, data analysis, or stream processing.

In this paper we present \emph{Array Folds Logic} (\emph{AFL}), which is an extension of the quantifier-free theory of integer arrays. But instead of introducing quantifiers, we introduce counting 
%and accumulation 
in the form of \Fold\ terms. Folding is a well-known concept in functional languages: as the name suggests, it folds some function over an array, i.e., applies it to every element of the array in sequence, while preserving the intermediate result. %Inside \Fold\ terms we allow a control structure with statements about counter variables.
%or accumulation 

%  Contrary to these logics, we do not introduce quantifiers, but a special construct, \emph{folding}, which is well-known from functional languages. 

To illustrate the kind of problems we are dealing with, consider the following toy example: given an array, accept it if the number of minimum elements in the array is the same as the number of maximum elements in the array. E.g., the array $[1,2,7,4,1,3,7,5]$ is accepted (because there are two $1$'s and two $7$'s), while the array $[1,2,7,4,1,3,6,5]$ is rejected (because there is only one $7$).

Written in a programming language like \texttt{C}, the problem can be solved by the piece of code shown in Figure~\ref{fig:toy-a}, but such \emph{explicit} solution cannot express verification conditions for \emph{symbolic} verification and testing.
%Clearly, this solves the problem for one  \emph{concrete} array, but cannot be integrated into a \emph{symbolic} verification or testing method. 
We can use the quantified theory of arrays mixed with assertions about cardinality of sets, as in Figure~\ref{fig:toy-b}. Unfortunately, such a combination is undecidable (by a reduction from Hilbert's Tenth Problem: replace \Fold{s} with cardinalities in the proof of Theorem~\ref{hilbert}).
%there is no logic and decision procedure capable of handling such a combination. 

\newcommand{\Min}{\var{min}}
\newcommand{\Max}{\var{max}}

\begin{figure}[t]
\centering
\begin{subfigure}[b]{.45\textwidth}
\begin{small}
\begin{verbatim}
min = max = a[0];
j = k = 0;
for(i=0;i<size(a);i++) {
  if(a[i]<min) { min=a[i]; j=1; }
  if(a[i]==min) j++;
}
for(i=0;i<size(a);i++) {
  if(a[i]>max) { max=a[i]; k=1; }
  if(a[i]==max) k++;
}
assert(j==k);
\end{verbatim}
\end{small}
\vspace*{-2mm}
\caption{\texttt{C} language}
\label{fig:toy-a}
\end{subfigure}
\hspace*{5mm}
\begin{subfigure}[b]{0.45\textwidth}
%\scalebox{0.8}{
\begin{align*}
& \exists\, \Min, \Max, i_1, i_2, j, k \;.\; \\
& 0 \le i_1 < |a| \fand  0 \le i_2 < |a| \fand \\
& a[i_1] = \Min \fand  a[i_2] = \Max \fand \\
& \forall i. (a[i] \ge \Min) \fand \\
& \forall i.(a[i] \le \Max)  \fand  \\
& j=\big|\{ i \:|\: a[i]=\Min \} \big| \;\land \\
& k=\big|\{ i \:|\: a[i]=\Max \} \big| \;\land \\
& j = k
\end{align*}
\vspace*{-2mm}
\caption{Quantified arrays + cardinality}
\label{fig:toy-b}
\end{subfigure}
\vspace*{2mm}

\begin{subfigure}[b]{\textwidth}
\begin{small}
\begin{gather*}
0 \le i_1 < |a| \fand  0 \le i_2 < |a| \fand a[i_1] = \Min \fand  a[i_2] = \Max \fand \\
\FoldPI{a}{\vect{0;0}}{\vect{ 
\elm=\Min \Then \cnt_1\pp; \elm > \Min \Then \Skip}} = \Vect{|a|;j} \fand \FoldPI{a}{\vect{0;0}}{\vect{\elm=\Max \Then \cnt_1\pp;\elm < \Max \Then \Skip }}=\Vect{|a|;k} \fand  j=k
\end{gather*}
%\begin{gather*}
%j = \FoldPI{a}{c=0}{\vect{a \ge \Min; i=i_1 \Then a = \Min; 
%a=\Min \;\Then c\pp}} \fand k = \FoldPI{a}{c=0}{\vect{a \le \Max; i=i_2 \Then a = \Max; a=\Max \;\Then c\pp}} \fand  \\
% 0 \le i_1 < |a| \fand  0 \le i_2 < |a| \fand j=k
%\end{gather*}
%\begin{gather*}
%\bm{\vect{|a|;j}} \!=\! \Fold\!\bm{\vect{\bar{a}=0; c=0}}\!\!\bm{\vect{ a \ge min; \bar{a}=m \;\Rightarrow\; a = min; 
%a=min \;\Rightarrow\; c++}}  \;\land\;
%\bm{\vect{|a|;k}} \!=\! \Fold\!\bm{\vect{\bar{a}=0; c=0}}\!\!\bm{\vect{ a \le max; \bar{a}=M \;\Rightarrow\; a=max;  
%a=max \;\Rightarrow\; c++}} \;\land \\
% 0 \le m < |a| \; \land \;  0 \le M < |a| \;\land \; j=k
%\end{gather*}
%\begin{gather*}
%\bm{\vect{|a|;j}} \!\!=\!\! \Fold\!\bm{\vect{a=0}}\bm{\vect{i++; i=m \land a \ne min \Rightarrow \Break; a < min \Rightarrow \Break; 
%a=min \;\Rightarrow\; c++}}\!\!\bm{\vect{i=0;c=0}}  \;\land\;
%\bm{\vect{|a|;k}} \!\!=\!\! \Fold\!\bm{\vect{i++; i=M \land a \ne max \Rightarrow \Break; a > max \Rightarrow \Break; 
%a=max \;\Rightarrow\; c++}}\!\!\bm{\vect{i=0;c=0}}\!\!(\vect{a}) \\
%\land \; j=k
%\end{gather*}
\end{small}
\vspace*{-2mm}
\caption{Array Folds Logic}
\label{fig:toy-c}
\end{subfigure}
\caption{A toy array problem}
\label{fig:toy}
%\vspace*{-4mm}
\end{figure}

%or, alternatively, 
%$ \bm{\vect{|a|;m; m}} \!=\! \Fold\!\bm{\vect{i++; a < min \Rightarrow \Break; 
%a=min \;\Rightarrow\; j++; a > max \Rightarrow \Break; 
%a=max \;\Rightarrow\; k++}}\!\!\!\bm{\vect{i=0;j=0;k=0}}\!\!(a)  \;\land\; m>0
%$
%
%$\Fold\!\bm{\vect{0}}\!\!\bm{\vect{\;\;\top \;\;\Rightarrow\;a \ge min; i=k \;\Rightarrow\; a=min; i>k \;\Rightarrow\; a \ne min}}\!\!(a)  \land 
%\Fold\!\bm{\vect{0}}\!\!\bm{\vect{\;\;\top \;\;\Rightarrow\;a \le max; i=l \;\Rightarrow\; a=max; i<l \;\Rightarrow\; a \ne max}}\!\!(a) $
%
%$\bm{\vect{|a|}} =   \Fold\!\bm{\vect{0}}\!\!\bm{\vect{\;k\le i\le l \;\Rightarrow\;c=c+1}}\!\!(a) $
%
%\ \\
%
%$\exists i_{min},min, max . (a[k] = min \land \forall i. a[i] \ge min \land \forall i>k. a[i] \ne min) \; \land$
%
%$\exists k,min . (a[k] = min \land \forall i. a[i] \ge min \land \forall i>k. a[i] \ne min)$
%
%
%$ 1 2 5 3 1 5 4 2 1 $

The solution we propose is shown in Figure~\ref{fig:toy-c}:
%. It looks very much like the programming notation above, but it is actually a logic of arrays with \Fold\ functions. 
in the example formula, the first \Fold\ applies a function to array $a$. The vector in the first parentheses gives initial values for the array index and counter $\cnt_1$; the function is folded over the array starting from the initial index. Index variable $\idx$ is implicit, and it is incremented at each iteration. The function itself is given in the second parentheses, and has two branches. The first branch \emph{counts} the number of positions with elements equal to $\Min$ in counter $\cnt_1$. The second branch \Skip{s} when the current array element $\elm$ is greater than the (guessed, existentially quantified) variable $\Min$. When $\elm<\Min$, the implicit \Break\ statement is executed, and the \Fold\ terminates prematurely.
% Note that such unconditional branches in \Fold\ terms can represent assertions that would otherwise require universal quantification. 
The result of the \Fold\ is compared to the vector which asserts that the final value of the array index equals to the array size $|a|$ (which means no \Break\ was executed), and the final value of $\cnt_1$ equals to $j$. The positions where elements are equal to $\Max$, are counted in the second \Fold, and the equality between these two counts is asserted. The ability to count over arrays with unbounded elements is a unique feature of Array Folds Logic.
%three branches. The first branch is executed unconditionally, and checks that every array element is greater or equal to the (guessed, existentially quantified) variable $\Min$. The second branch checks that the value at the guessed index $i_1$ is indeed equal to $\Min$. The last branch \emph{counts} the number of positions with elements equal to $\Min$ in counter $c$, which is initialized to $0$.

This paper makes the following contributions:

{\textbf{1.}} We define a new logic, called AFL, that can express interesting and non-trivial properties of counting over arrays, which are orthogonal to the properties expressible by other logics. Additionally, AFL can concisely summarize loops with internal branching that traverse arrays and perform counting, enabling verification and symbolic testing of programs with such loops.

{\textbf{2.}} We show that the satisfiability problem for AFL is $\PSPACE$-complete, and with a natural restriction the complexity decreases to $\NP$. We provide a decision procedure for AFL, which works by a reduction to the emptiness of (symbolic) reversal-bounded counter machines, which in turn reduces to the satisfiability of existential Presburger formulas. We show that adding either universal quantifiers or concatenation leads to undecidability.

{\textbf{3.}} We implemented tool \textsc{AFolder} \cite{tool} that can discharge proof obligations in AFL, and we demonstrate on real-life examples that our decision procedure can solve a broad range of problems in symbolic testing and program verification. %The tools is available at \cite{tool}.

\paragraph{Related work.} Our logic is related to the quantified fragments of the theory of arrays such as \cite{Bradley2006}\cite{HabermehlIV08}\cite{Habermehl2008b}\cite{AlbertiGS15}. These logics allow restricted quantifier prefixes, and their decision procedures work by rewriting to the (parametric) theories of array indices and elements (Presburger arithmetic being the most common case) \cite{Bradley2006}\cite{AlbertiGS15}, or by reduction to flat counter automata with difference bound constraints \cite{HabermehlIV08}\cite{Habermehl2008b}. An interesting alternative is provided in \cite{zhou2014array}, where the quantification is arbitrary, but array elements must be bounded by a constant given a priori; the decision procedure works by a reduction to WS1S.  A separate line of work is presented by the theory combination frameworks of \cite{Ghilardi2007}\cite{DeMoura09}, where the quantifier-free theory of arrays is extended by  \var{injective} predicate and  \var{domain} function \cite{Ghilardi2007}, or with \var{map} and \var{constant}-\var{value} combinators \cite{DeMoura09}. The theory of concatenation and its extensions \cite{Buchi88}\cite{Furia10}\cite{LinB16} are also related; their decision procedures work by reduction to Makanin's algorithm for solving word equations \cite{makanin1977}.
AFL can express some properties that are also expressible in these logics, such as \emph{boundedness}, \emph{partitioning}, or \emph{periodicity}; other properties, such as \emph{sortedness}, are not expressible in AFL. The counting properties that constitute the core of AFL are not expressible in any of the above logics.
We compare the expressive power of AFL and other logics in Section~\ref{sec:express}.

%There are two cornerstone operations, random access and concatenation, that form the basis of array and list theories. Correspondingly, different logics can be broadly divided into those that allow one or the other. Array logics can allow some restricted form of quantifier alternation, and the decision procedure typically works by reduction to Presburger arithmetic \cite{Bradley2006,.,.} or automata; logics of lists with concatenation are restricted to the quantifier-free fragment, and are usually based on Makanin's algorithm for solving word equations. 

%Our logic is related to both directions outlined above. In particular, it allows random access and a restricted form of list operations. There is a substantial set of properties that are expressible both in AFL and in these logics. At the same time, AFL allows to express some important properties, that are not expressible in other logics.
%
There are numerous works on loop acceleration and summarization \cite{Comon1998}\cite{Bozga2010}\cite{KnoopKZ11}, also in the context of verification and symbolic testing \cite{SaxenaPMS09}\cite{Flata2009}\cite{GodefroidL11}\cite{Hojjat2012} and array-manipulating programs \cite{Bozga2009}\cite{Booster2014}\cite{Alberti2015}. Our logic allows one to summarize loops with internal branching and counting, which are outside of the scope of these works.

The decision procedure for AFL is based on decidability results for emptiness of reversal-bounded counter machines \cite{Ibarra78}\cite{Ibarra81}\cite{Ibarra02}, on the encoding of this problem into Presburger arithmetic \cite{HagueL11}, and on the computation of Parikh images for NFAs \cite{seidl2004counting}. In Section~\ref{sec:procedure} we extend the encoding procedure to symbolic counter machines, and present some substantial improvements that make it efficient for solving practical AFL problems.

%%% Local Variables:
%%% mode: latex
%%% TeX-master: "main"
%%% End:

%\section{Preliminaries}
%\label{sec:preliminaries}
%
%\input{preliminaries}

\section{Array Folds Logic}
\label{sec:logic}
 We assume familiarity with the standard syntax and terminology of many-sorted first-order logics. We use vector notation: $\vec{v} = (v_1,\ldots,v_n)$ denotes an ordered sequence of terms. For two vectors $\vec{u}$ and $\vec{v}$, we write their concatenation as $\vec{u}\vec{v}$.

%\subsection{Notation}
%
%For notational convenience, we use vector notation:
%$\vec{w} = (w_1,\ldots,w_n)$ to denote an ordered sequence of terms.
%Given two vectors $\vec{w}$ and $\vec{v}$, we denote their
%concatenation as $\vec{w}\vec{v}$.
%
%A formula $\vec{x}=\vec{y}$ means that the sizes of vectors $\vec{x}$
%and $\vec{y}$, as well as the corresponding elements $x_i$ and $y_i$,
%are equal.
%
%Everywhere in the paper, if not stated otherwise, we use the following
%convention: $a$, $b$ are array variables; $i$, $j$ are integer
%variables; $m$, $n$ are integer constants.

%\paragraph{Theory of Arrays.}

%A set $\mathcal{Q}$ of strings over $\{\exists,\forall\}$ denotes the $Q$-fragment of a first-order theory: all formulas in the fragment are of the form $\mathcal{d} \partial_1 v_1 \cdots \partial_n \varphi$, where $\varphi$ is a quantifier-free formula.

Within this paper we consider the domains of arrays, array
indices, and array elements to be $\Ab = \Zb^*$,
$\Nb = \{\,0,1,\ldots \,\}$, and $\Zb = \{\,\ldots,-1,0,1,\ldots \,\}$
respectively.

Presburger arithmetic has the signature
$\Sigma_{\mathbb{Z}} = \{\,0,1,+,<\,\}$; we use it for array indices
and elements, as well as other arithmetic assertions, possibly with
embedded array terms.
%$\Sigma_{\mathbb{Z}} = \{\,0,1,+,-,=,<\,\}$
We write $\true$ and $\false$ to denote a valid and an unsatisfiable Presburger formula, respectively.

The theory of integer-indexed arrays extends Presburger arithmetic
with functions \var{read}, and \var{write}, and has the signature
$\Sigma_{A} = \Sigma_{\mathbb{Z}} \cup \{\,\cdot[\cdot],\cdot\{\cdot
\leftarrow \cdot\}\, \}$.
The \var{read} function $a[i]$ returns the $i$-th element of array
$a$, and the \var{write} function $a\{i \leftarrow x\}$ returns array $a$ where the $i$-th element is replaced by $x$. These functions should satisfy the \emph{read}-over-\emph{write} axioms as described by McCarthy \cite{McCarthy62}; the decision procedure for the quantifier-free array theory is presented in \cite{QuantifierFreeArrays}.

\subsection{Syntax}
\label{afl-syntax}

\emph{Array Folds Logic (AFL)} extends the quantifier-free theory of integer arrays with the ability to perform counting. The extension works by incorporating \Fold\ terms into arithmetic expressions; such a term folds some function over the array by applying it to each array element consecutively.

AFL contains the following sorts: array sort $\asort$, integer sort $\isort$, Boolean sort $\bsort$, and two enumerable sets of sorts for integer vectors $\vsort{m}$ and functional constants $\fsort{m} = \vsort{m} \times \isort \rightarrow \vsort{m}$, for each $m \in \Nb$, $m>0$. The syntax of the AFL terms is shown in Table~\ref{tab:syntax}; $a$ and $b$ denote array variables, $x$ denotes an integer variable, $n$ and $m$ denote integer constants.

\emph{Array terms}~$A$ of sort $\asort$ are represented either by an array variable $a$, or by the \var{write} term $a\{T \leftarrow T\}$.

\emph{Integer terms}~$T$ of sort $\isort$ can be integer constants $n \in \Zb$, integer variables $x$, integer addition, \var{read} term $a[T]$ for the index represented as an integer term, or the term $|a|$, which represents the length of array $a$. 

\emph{Boolean terms} $B$  of sort $\bsort$ are formed by array equality, usual Presburger and Boolean operators, and equality between vectors of sort $\vsort{m}$.

\emph{Vector terms} $V^m$ of sort $\vsort{m}$ are either a list of $m$ integer terms, or a \Fold\ term. The former is written as a vertical list in parentheses; they can be omitted when $m=1$. The latter, written as $\Fold_{a}\:{v}\:{f}$, represents the result of the transformation of an input vector $v$ of sort $\vsort{m}$ by folding a functional constant $f$ of sort $\fsort{m}$ over an array $a$.  The first element of $v$ specifies an initial value of the array index; the remaining elements give initial values for the counters that can be used inside $f$. The resulting vector after the transformation gives the final values for the array index and the counters.

\emph{Functional constants} (when no confusion can arise, we call them  \emph{functions}) $F^m$ of sort $\fsort{m}$ can only be  a parenthesized list of branches (guarded commands); the length of the list is unrelated to $m$. %, which defines the number of counters a function operates on. 
A function $f$ of sort $\fsort{m}$ can refer to the following implicitly declared variables: $\elm$ for the currently inspected array element; $\idx$ for the current array index; $\cnt_1,\ldots,\cnt_{m-1}$ for the counters; $\st$ for the state (control flow) variable. All other variables that occur inside $f$ are considered as free variables of sort $\isort$. 

\emph{Guards} are conjunctions of \emph{atomic guards}, which can compare array elements, indices, and counters to integer terms; the state variable can only be compared  to integer constants. 
\emph{Updates} are lists of \emph{atomic updates}; they can increment or decrease counters by a constant, assign a constant to the state variable, \Skip, i.e. perform no updates, or execute a \Break\ statement, which terminates the \Fold\ at the current position.  Counter or state updates define a function $\Zb \ra \Zb$.
Guards and updates translate into logical formulas that either constraint the current variable values, or relate the current and the next-state (primed) variable values in the obvious way; we denote this translation by $\Phi$. E.g.,
% the guard $\guard \equiv (\cnt_1 > T)$ defines the formula $\Phi(\guard) \equiv (\cnt_1 > T)$, while 
 the update $\update \equiv (\cnt_1 \pen)$ defines the formula $\Phi(\update) \equiv (\cnt_1' = \cnt_1+n)$. 

%The \Skip\ statement with the preceding $\Rightarrow$ can be omitted.
\begin{table}[t]
\begin{align*}
  A~~~::=~& a ~~|~~ a\{T \leftarrow T\} \\
% ~~|~~ \Filter_{a}(f)  ~~|~~ \Map_{a}(f) ~~|~~ \var{tail}(a)\\
%  \Filter_{a}(\lambda\, x\,.\,F(x))  ~~|~~ \Map_{a}(\lambda\, x\,.\,T(x))
  T~~~::=~& n ~~|~~ x ~~|~~ T+T ~~|~~ a[T] ~~|~~ |a| \\
% ~~|~~ \var{head}(a) \\
  B~~~::=~& a = b  ~~|~~ T = T ~~|~~ T < T ~~|~~ \neg B ~~|~~ B \land B ~~|~~    
  V^m = V^m \\
  V^m~::=~& \Vect{T;\cdots;T} ~~|~~ \Fold_{a}\: V^m\: F^m 
  ~~~~~~~~~~~~~~~~~~~~~~~~~~~~ \\
%  F^m~::=~& \Vect{\guard \!\Then\! \update;\cdots;\guard \!\!\Then\!\! \update} \\
  F^m~::=~& \Vect{\guard \!\Then\! \update;\cdots;\guard \!\Then\! \update} \\
%  \branch ~::=~&  \guard \!\!\Then\!\! \update \\
  \guard ~::=~&  \elm \approx T ~|~ \idx \approx T ~|~ \cnt_m \approx T  ~|~  \st \approx  n ~|~ \guard \land \guard \hspace{15mm}\text{($\approx \;\in\: \{>,<,=, \ne\}$)}\\
  \update ~::=~&  \cnt_m \pen ~|~ \st \leftarrow n ~|~ \Skip ~|~ \Break ~|~  \update \:; \update
 % \hspace{16.7mm}\text{($\sim \;\in\: \mset{\texttt{++},\texttt{--} \texttt{+=}\,n,\texttt{-=}\,n}$)}
\end{align*}
\vspace*{2mm}
\caption{Syntax of AFL.}
\label{tab:syntax}
\end{table}

%In order that functions behave in a functional way, 
We require that guards of all branches are mutually exclusive. There is an implicit ``catch-all'' branch with the \Break\ statement, whose guard evaluates to $\true$ exactly when guards of all other  branches evaluate to $\false$. We also require that each branch contains at most one update for each implicit variable.

%For the logic to be decidable, w
We restrict the control flow in functions, which is defined by state variable $\st$. Notice that $\st$ is syntactically finite state. Thus, given a set of function branches $\var{Br}$, we define an edge-labeled control flow graph $G = \tuple{S,E, \gamma}$, where:
%\vspace*{-0mm}
\begin{itemize}
\item states $S = \bigmset{0} \cup \bigmathset{n}{\st \!\leftarrow\! n \in \var{Br}}$;
\item edges $E = \bigcup_{\guard \!\!\Then\!\! \update \:\in\: \var{Br}} \bigmathset{(s_1,s_2)}{s_1 \!\models\! \guard \land s_2 \!=\! \var{ite}(\st \!\leftarrow\! n \in \update,n,s_1)}$;
\item $\gamma$ is the labeling of edges with the set of formulas $\Phi(\guard)$ and $\Phi(\update)$ for each guard or update which occurs in the same branch.
\end{itemize}
%\vspace*{-0mm}
% contains $0$ and the set of constants from $\st \ra n$ terms
%$E$ is defined by $\st \ra n$ terms,
%,  and $\gamma$ is the labeling of edges with $\cnt_n \pen$ terms that occur in the same update as $\st \ra n$ terms. 
We require that edges in the strongly-connected components of $G$ are labeled with counter updates that are, for each counter, all non-decreasing, or all non-increasing. Thus, $G$ is a DAG of SCCs, where counters within each SCC behave in a monotonic way. We use this restriction to derive from $f$ a reversal-bounded counter machine (see Definition~\ref{def:fscm}).

The presented syntax is minimal and can be extended with convenience functions and predicates such as $\{-, n \cdot, , \le, \ge, \for, \texttt{++},\texttt{--},\texttt{-=}n \}$ in the usual way. We allow to use $*$ to denote the absence of constraints: this is useful for vector notation. We replace each $*$ in the formula with a unique unconstrained variable.

%For convenience, we allow also branches of a slightly extended syntax: $\guard \land \update$, and $\guard  \Rightarrow \guard \land \update$; it allows guards to occur in a conjunction with updates. The intended meaning is that such a guard, whenever applicable, should hold; otherwise, the \Fold\ is terminated. The extended syntax translates to the standard one by applying the following rule till saturation: replace a command $g_1 \land g \land u$, where $g_1$ is an atomic guard, with two commands $\neg g_1 \Rightarrow break$, and $g_1 \Rightarrow g \land u$ (note that atomic guards are closed under negation). Each application of this rule reduces by one the number of atomic guards, which occur in a conjunction with updates.

\subsection{Semantics}

\begin{table}[t]
\begin{align*}
   1.~& \ev{\Vect{t^1_1;\cdots;t^1_m}\! =\! \Vect{t^2_1;\cdots;t^2_m}}\!\!\!\!\!\!\!\!\!\!\! &~\equiv~  &\left(\ev{t^1_1} = \ev{t^2_1} \right) \land \ldots \land \left(\ev{t^1_m} = \ev{t^2_m} \right) \\
%2.~& \ev{\Vect{t_1;\cdots;t_m}} &~\equiv~  &\Vect{\ev{t_1};\cdots;\ev{t_m}}\\
 2.~& \ev{\Fold_{a}\: v\: f} &~\equiv~  &\evsk{\Fold_{a}\: v\: f} \text{, where } \kappa(\FV) = \Vect{v;0}\\   
3.~&\evsk{\Fold_{a}\: v\: f} &~\equiv~ &\text{if  } (\evk{\idx}<0) \text{ or } (\evk{\idx}\ge |a|) \text{ or } (\var{false} \in \evsk{f}) \text{ then } v\\  
& & &\text{else} \evskp{\Fold_{a}\: v'\: f}\!\!\!\text{, where } \kappa'(\FV) = \Vect{v';\st'} = \evsk{f}\big(\kappa(\FV)\big)\\  
4.~&\evsk{f}\Vect{v_1;\ldots;v_m} &~\equiv~  & \Vect{v'_1;\ldots;v'_m}, \text{ where } v'_j \equiv \text{ if } \var{upd}(v_j) \in \evsk{f} \text{ then } \var{upd}(v_j) \text{, else } v_j \\
5.~&\evsk{\Vect{\guard_1 \Rightarrow \update_1;\cdots;\guard_m \Rightarrow \update_m}} &~\equiv~  & \{\idx'\!=\!\idx\!+\!1\} \cup \evsk{\guard_1 \Rightarrow \update_1} \cup \ldots \cup  \evsk{\guard_m \Rightarrow \update_m}\\  
6.~&\evsk{\guard \Rightarrow \update} &~\equiv~ &\text{if } \evsk{\guard}=\var{true} \text{  then  } \evsk{\update} \text{  else  } \emptyset\\
7.~&\evsk{ \elm \approx t} &~\equiv~  &\evk{\elm} \approx \ev{t} \hspace{20mm}\text{(similarly for } \idx \approx T ,\; \cnt_m \approx T,\;    \st \approx  n \text{)}\\  
%&\evsk{ \idx \approx t} &~\equiv~  &\evk{\idx} \approx \ev{t}\\  
%&\evsk{ \cnt_m \approx t} &~\equiv~  &\evk{\cnt_j} \approx \ev{t}\\  
%&\evsk{ \st \approx n} &~\equiv~  &\evk{\st} \approx n\\  
8.~&\evsk{\guard_1 \land \guard_2} &~\equiv~  &\evsk{\guard_1} \land \evsk{\guard_2} \\  
9.~&\evsk{\update_1 ; \update_2} &~\equiv~  &\evsk{\update_1} \cup \evsk{\update_2} \\  
10.~&\evsk{\cnt_m \pen } &~\equiv~ & \{\cnt_m' \!=\! \cnt_m \!+\! n \}\\
%\hspace{15mm}\text{(similarly for } \texttt{++}, \texttt{--},\texttt{-=}n \text{)}\\  
11.~&\evsk{\st \leftarrow n } &~\equiv~  & \{\st' = n \}\\  
12.~&\evsk{\Skip} &~\equiv~ & \emptyset\\
13.~&\evsk{\Break} &~\equiv~ & \mset{\var{false}} 
% \left[ \right]^\sigma &~\equiv~  \left[ \right]^\sigma \\  
% \ev{\Vect{b_1;\cdots;b_k}} &~\equiv~  \ev{b_1} \land \ldots \land \ev{b_k} \land \bigwedge_{x' \in V \,.\, \forall j . x' \not\in \var{vars}({\ev{b_1}}) }{x'=x}
\end{align*}
\vspace*{2mm}
\caption{Semantics of AFL}
\label{tab:semantics}
\end{table}

%Semantically, a functional constant $f$ of sort $\fsort{m}$ is a function that operates on two sets of variables: the \var{pre}-variables $\{\idx,c_1,\ldots,c_{m-1},s, b, \elm \}$ and the  \var{post}-variables $\{\idx',c_1',\ldots,c_{m-1}',s', b'\}$. Here $b$ is a special variable for handling \Break.
%
%Guards and updates translate to atomic formulas in the standard way; we denote the result of translating $\delta$ by $\tau(\delta)$. E.g., we have $\tau(c_1 \pe 2) \equiv c_1' = c_1 + 2$, and $\tau(s \leadsto 3) \equiv s' = 3$. For translating \Break, we introduce a dedicated integer variable $b$: it starts initialized with $0$, and $\tau(\Break) \equiv b'=1$.
For a given AFL formula $\phi$, we denote the sets of free variables of $\phi$ of sort $\asort$ and $\isort$ by $\VA$ and $\VI$, respectively. All free variables are implicitly existentially quantified. For functions of sort $\fsort{m}$, we denote by $\FV$ the set of their implicit variables $\{\idx,\cnt_1,\ldots,\cnt_{m-1},\st \}$.
%of $\phi$ of sort $\fsort{m}$ is denoted by $\CF{m}$.

The treatment of array writes and reads in the quantifier-free array theory is standard \cite{QuantifierFreeArrays}, and we do not elaborate on it here.
Array equalities partition the set of array variables into equivalence classes; all other constraints are then translated into constraints over a representative of the corresponding equivalence class.

An \emph{interpretation} for AFL is a tuple $\sigma = \tuple{\lambda,\mu}$, where
%\begin{itemize}
%\item 
$\lambda: \VI \ra \Zb$ assigns each integer variable an integer, and
%\item
 $\mu: \VA \ra \Zb^*$ assigns each array variable a finite sequence of integers.
%\end{itemize}

%\item $\nu: \CF{m} \ra (\Zb^{m+3} \ra \Zb^{m+2})$, for each $m>0$, assigns each functional constant of sort $\fsort{m}$ a function that maps an integer vector of size $m+3$ to an integer vector of size $m+2$.
%\begin{itemize}
%\item $\lambda: \VI \ra \Zb$, assigns each integer variable an integer;
%\item $\mu: \VA \ra \Zb^*$, assigns each array variable a finite sequence of integers.
%%\item $\nu: \CF{m} \ra (\Zb^{m+3} \ra \Zb^{m+2})$, for each $m>0$, assigns each functional constant of sort $\fsort{m}$ a function that maps an integer vector of size $m+3$ to an integer vector of size $m+2$.
%\end{itemize}
The semantics of an AFL term $t$ under the given interpretation $\sigma$ is defined by the \emph{evaluation} $\left[ t \right]^\sigma$. 
Terms that constitute functions are evaluated in the additional \emph{context} $\kappa$. For a function $f$ of sort $\fsort{m}$, $\kappa: \FV \ra \Zb^{m+1}$ maps internal variables of $f$ to integers. 
%In order that $f$ behaves in a functional way, we require that for each pair of commands $g_1\!\!\Then\!\! u_1$  and $g_2 \!\!\Then\!\! u_2$ in $f$, either
%%one of the following holds:
%%\begin{enumerate}
%%\item 
%the formula $\evsk{g_1 \fand g_2} \equiv \var{false}$ for all $\sigma$ and $\kappa$; or
%%\item 
%a) the sets of variables from $\FV$ in $u_1$ and $u_2$ are disjoint, and b) either both $u_1$ and $u_2$ contain a $\Break$, or none of them does. 
%%\end{enumerate}
The evaluation of Presburger, Boolean, and array terms is standard; the remaining ones are shown in Table~\ref{tab:semantics}. We give some explanations here (the remaining semantic rules are self-explanatory):
\begin{enumerate}
\item Vector equality resolves to a conjunction of equalities between components.
%\item An evaluation of a vector is a vector of component evaluations.
\item A \Fold\ term evaluates in the initial context that is defined by the given initial vector of counters $v$, and assigns 0 to state variable $\st$.
\item A contextual \Fold\ term checks whether the array index is out of bounds, or a $\Break$ statement is executed in the current context (this is the only way for $\evsk{f}$ to contain $\false$). If yes, \Fold\ terminates, and returns the current vector $v$. Otherwise \Fold\ continues with the updated vector and context.
\item If an update $\var{upd}(v_j)$ for some variable $v_j$ is present in the function evaluation, then it is applied. Otherwise, the old variable value is preserved.
\item An evaluation of a function, represented by a list of branches, is a union of updates from its branch evaluations. Index $\idx$ is always incremented by 1.
\item A guarded command evaluates to its update if its guard evaluates to \var{true}.
\item A comparison over an internal variable evaluates it in the context $\kappa$, and the comparison term is evaluated in the interpretation $\sigma$.
%\item Both components of a conjunction are evaluated separately.
%\item Updates are evaluated, and collected in a set of updates for the branch.
\end{enumerate}

\subsection{Expressive Power}
\label{sec:express}

Here we give some example properties that are expressible in AFL, and compare its expressive power to other decidable array logics.

\begin{enumerate}

\item \textbf{Boundedness}. All elements of array $a$ belong to the interval $[l,u]$.
%\vspace*{-1mm}
\begin{gather*}
\FoldPI{a}{0}{\vect{l \le \elm \le u \Then \Skip}} = |a|
\end{gather*}

\item \textbf{Partitioning}. Array $a$ is partitioned if there is a position $p$ such that all elements before $p$ are smaller or equal than all elements at or after $p$.
%\vspace*{-1mm}
\begin{gather*}
\FoldPI{a}{0}{\vect{\idx<p  \fand \elm  \le a[p]  \Then \Skip; 	\idx \ge p \fand  \elm \ge a[p]  \Then \Skip}} = |a|
%\begin{gather*}
%\FoldPI{a}{0}{\vect{\idx<p  \Then \elm  \le a[p]; 	\idx \ge p \Then  \elm \ge a[p] }} = |a|
\end{gather*}

\item \textbf{Periodicity}. Array $a$ is of the form $(01)^*$:
%\vspace*{-1mm}
\begin{gather*}
 \FoldPI{a}{0}{\vect{\st  = 0 \fand \elm=0  \Then \st \leftarrow 1; \st = 1 \fand \elm=1  \Then \st  \leftarrow 0}} = |a|
\end{gather*}
% \FoldPI{a}{0}{\vect{\st  = 0 \Then \elm=0 \:\land\: \st \ra 1; \st = 1 \Then \elm=1 \:\land\: \st  \ra 0}} = |a|

\item \textbf{Pumping}. Array $a$ is of the form $0^n 1^n$ (a canonical non-regular language; $0^n 1^n 2^n$, a non-context-free language, is equally expressible):
%\vspace*{-1mm}
\begin{gather*}
 \FoldPI{a}{\vect{0;0;0}}{\vect{\st = 0 \fand \elm=0  \Then \cnt_1\pp \phantom{ \st \leftarrow 1 \fand}; \st = 0 \fand \elm=1  \Then \cnt_2\pp \fand \st \leftarrow 1; \st = 1 \fand \elm=1 \Then \cnt_2\pp \phantom{ \st \leftarrow 1 \fand}}} = \Vect{|a|;n;n}
\end{gather*}

\item \textbf{Equal Count}. Arrays $a$ and $b$ have equal number of elements greater than $l$:
%\vspace*{-1mm}
\begin{gather*}
 \Vect{|a|;n} = \FoldPI{a}{\vect{0;0}}{\vect{\elm > l \Then \cnt_1\pp; \elm \le l \Then \Skip}} \fand \Vect{|b|;n} = \FoldPI{b}{\vect{0;0}}{\vect{\elm > l \Then \cnt_1\pp; \elm \le l \Then \Skip}}
\end{gather*}

\item \textbf{Histogram}. The histogram of the input data in array $a$ satisfies the distribution $H\big(\{ i \:|\: a[i]<10 \}\big) \ge  2 H\big(\{ i \:|\: a[i] \ge 10 \}\big) $:
%\vspace*{-1mm}
\begin{gather*}
 \FoldPI{a}{\vect{0;0}}{\vect{\elm<10 \Then \cnt_1\pp; \elm\ge 10 \Then \Skip}} = \Vect{|a|;h_1} \fand
 \FoldPI{a}{\vect{0;0}}{\vect{\elm \ge 10 \Then \cnt_1\pp; \elm < 10 \Then \Skip}} = \Vect{|a|;h_2} \fand
 h_1 \ge 2 h_2 
\end{gather*}

\item \textbf{Length of Format Fields}. The array contains two variable-length fields. The first two elements of the array define the length of each field; they are followed by the fields themselves, separated by $0$:
%\vspace*{-1mm}
\begin{gather*}
 \mathit{len}_1=a[0] \fand \mathit{len}_2=a[1] \fand 
 \FoldPI{a}
        {\vect{2;0;0}}
        {\vect{\st = 0 \fand \elm\ne 0 \Then \cnt_1\pp; \st = 0 \fand \elm=0 \Then \st \leftarrow 1; \st = 1 \fand \elm \ne 0 \Then  \cnt_2\pp}} = \Vect{|a|;\var{len}_1;\var{len}_2}
\end{gather*}

\end{enumerate}

\paragraph{Comparison with other logics.} Most decidable array logics can specify universal properties over a \emph{single} index variable like (1) above; AFL uses \emph{fold}s to express such universal quantification. Properties that require universal quantification over \emph{several} index variables, like sortedness, are inexpressible in AFL (it can simulate some of such properties, like partitioning (2), using a combination of \emph{fold}s with existential guessing).
% are expressible in most array logics, including AFL, with a \emph{variable} position $p$, but in the logic of sequences \cite{Furia10} it is expressible only for a \emph{fixed} position.
Periodic facts like (3) are inexpressible in \cite{Bradley2006}, but AFL as well as \cite{HabermehlIV08}\cite{Furia10} can express it.  Counting properties such as (4)--(7), which constitute the core of AFL, are not expressible in other decidable logics over arrays and sequences. % other properties, such as \emph{sortedness}, are not expressible in AFL. The counting properties that constitute the core of AFL are not expressible in any of the above logics.

\section{Motivating Example}
\label{sec:example}
As a motivating example to illustrate applications of our logic, we consider a parser for the Markdown language as implemented in the Redcarpet project, hosted on GitHub \cite{Markdown}. Redcarpet is a popular implementation of the language, used by many other projects, in particular by the GitHub itself. Figure~\ref{fig:example} shows the excerpt  from the function \texttt{parse\_table\_header}, which can be found in the file \texttt{markdown.c}. %For the clarity of the presentation we show only the essential parts of the code. %; due to space constraints we also had to squeeze several lines of code into one. Written properly, these program fragment has the size of 32 LOC: a serious challenge for testing and verification, as we show further.

The function considered in the example parses the header of a table in the Markdown format. The first line of the header specifies column titles; they are separated by pipe symbols (`\texttt{|}'); the first pipe is optional. Thus, the number of pipes defines the number of columns in the table. The second line describes the alignment for each column, and should contain the same number of columns; in between each pair of pipes there should be at least three dash (`\texttt{-}') or colon (`\texttt{:}') symbols. A colon on the left or on the right side of the dashes defines left or right alignment; colons on both sides mean centered text. Thus, the two lines ``\texttt{|One|Two|Three|}'' and ``\texttt{|:--|:--:|--:|}'' specify three columns which are left-, center-, and right-aligned. Replacing the second line with either ``\texttt{|:-|:--:|--:|}'' or ``\texttt{|:--|:--:|}'' would result in the ill-formed input: the former doesn't contain enough dashes in the first column, while the latter doesn't specify the format for the last column.

\newcommand{\annotate}[1]{
\vspace*{-1mm}
$\Big\{\;
#1
\:\Big\}$
\vspace*{-0.2mm}
}

\begin{figure}
\begin{flushright}

\begin{small}

\begin{verbatim}
 1: static size_t parse_table_header(uint8_t *a, size_t size, ...)
 2:   size_t i=0, pipes=0;
\end{verbatim}
\hspace{2mm}\annotate{
i_0 = 0 \fand p_0=0
}
\begin{verbatim}
 3:   while (i < size && a[i] != '\n')
 4:     if (a[i++] == '|') pipes++;
\end{verbatim}
\annotate{
\Vect{i_1;p_1} = \FoldPI{a}{\vect{i_0;p_0}}{\vect{\elm=P \Then \cnt_1\pp; \elm \ne P \fand \elm \ne N \Then \Skip}}
%p_1 = \FoldPI{a[i_0,i_1]}{p=p_0}{\vect{a \ne N; a=P \Then p\pp}}
}
% if (i == size || pipes == 0) return 0;
%   header_end = i;
%  while (header_end > 0 && _isspace(data[header_end - 1]))
%    header_end--;
\begin{verbatim}
 5:   if (a[0] == '|') pipes--;
\end{verbatim}
\annotate{
%(a[0]=P \fand p_2=p_1-1) \for (a[0] \ne P \fand p_2 = p_1)
\Vect{*;p_2} = \FoldPI{a}{\vect{0;p_1}}{\idx=0 \fand \elm=P \!\!\Then\!\! \cnt_1\mm}
%p_2 = \FoldPI{a[0,*]}{p=p_1}{i=0 \fand a=P \fand p\mm}
}
%  if (header_end && data[header_end - 1] == '|')
%    pipes--;
%  *columns = pipes + 1;
%  *column_data = calloc(*columns, sizeof(int));
% i++
%  if (i < size && data[i] == '|')
%    i++;
\begin{verbatim}
 6:  i++;
 7:  if (i < size && a[i] == '|') i++;
\end{verbatim}
\annotate{
i_2=i_1+1 \fand i_3 = \FoldPI{a}{i_2}{\idx=i_2 \fand \elm = P \Then \Skip}
%i_2=i_1+1 \fand \FoldP{a[i_2,i_3]}{i=i_2 \fand a = P}
}
\begin{verbatim}
 8:   end = i;
 9:   while (end < size && a[end] != '\n') end++;
\end{verbatim} % TODO add i++
\annotate{
\var{e}_0=i_3 \fand e1 = \FoldPI{a}{e_0}{\elm \ne N \Then \Skip}
%\var{e}_0=i_1 \fand \FoldP{a[\var{e}_0,\var{e}_1]}{a \ne N}
}
\begin{verbatim}
10:  for (col = 0; col<pipes && i<end; ++col) {
11:    size_t dashes = 0;
\end{verbatim}
\annotate{
c_0=0 \fand c_0<p_2 \fand i_3 < \var{e}_1 \fand d_0=0
}
%    while (i < under_end && data[i] == ' ')
%      i++;
\begin{verbatim}
12:    if (a[i] == ':') { i++; dashes++; column_data[col] |= ALIGN_L; }
\end{verbatim}
\annotate{
%(a[i_1]=C \fand i_2=i_1+1 \fand \var{d}_1=d_0+1) \lor 
%(a[i_1]\ne C \fand i_2=i_1 \fand \var{d}_1=d_0)
\Vect{i_4;d_1} = \FoldPI{a}{\vect{i_3;d_0}}{\idx=i_3 \fand \elm = C \!\!\Then\!\! \cnt_1\pp}
%d_1 = \FoldPI{a[i_3,i_4]}{d=d_0}{i=i_3 \fand a = C \fand d\pp}
}
\begin{verbatim}
13:    while (i < end && a[i] == '-') { i++; dashes++; }
\end{verbatim}
\annotate{
\Vect{i_5;d_2} = \FoldPI{a}{\vect{i_4;d_1}}{\idx<e_1 \fand \elm=D \!\!\Then\!\! \cnt_1\pp} %\var{d}_2 = \FoldPI{a[i_4,i_5]}{d=\var{d}_1}{i<\var{e}_1 \fand a=D \fand d\pp}
}
\begin{verbatim}
14:    if (a[i] == ':') { i++; dashes++; column_data[col] |= ALIGN_R; }
\end{verbatim} % if (i < end && a[i] == ':') {
\annotate{
%(a[i_3]=C \fand i_4=i_3+1 \fand \var{d}_3=d_2+1) \lor 
%(a[i_3]\ne C \fand i_4=i_3 \fand \var{d}_3=d_2)
\Vect{i_6;d_3} = \FoldPI{a}{\vect{i_5;d_2}}{\idx=i_5 \fand \elm = C \!\!\Then\!\! \cnt_1\pp}
%d_3 = \FoldPI{a[i_5,i_6]}{d=d_2}{i=i_5 \fand a = C \fand d\pp}
}
%    while (i < under_end && data[i] == ' ')
%      i++;
\begin{verbatim}
15:    if (i < end && a[i] != '|' && a[i] != '+') break;
16:    if (dashes < 3) break;
17:    i++;
\end{verbatim}
\annotate{
(i_6 \ge \var{end}_1 \lor a[i_6]=P \lor a[i_6]=A) \fand d_3 \ge 3 \fand i_7=i_6+1 
 \fand c_1=c_0+1}
%\annotate{
%i_4 \ge \var{end}_1 \for a[i_4]=P \for a[i_4]=A
%%i_4 \ge \var{end}_1 \for \FoldP{a[i_4,i_4+1]}{a = P} \for \FoldP{a[i_4,i_4+1]}{a = A}
%}
\begin{verbatim}
18:  }
19:  if (col < pipes) return 0;
\end{verbatim}
\annotate{
c_1 \ge  p_2
}
\end{small}
\end{flushright}
\caption{An excerpt from the Redcarpet Markdown parser with AFL annotations}
\label{fig:example}
\vspace*{-3mm}
\end{figure}

Suppose, we are interested in the symbolic testing of the parser implementation; in particular, we want to cover all branches in the code for a reasonably long input. 
%For that we postulate that the first line of the input should contain at least $n$ pipes (we add the condition \texttt{assert(pipes>=n)} after line 5). 
For that we postulate that the first input line contains at least $n$ columns (we add the condition \texttt{assert(col>=n)} after line 19). 

%fix the first line to ``\texttt{| One | Two | Three |}''. 
Now, consider the last conditional statement at line 19. The \texttt{if} branch is satisfied by an empty second input line; and indeed, such concolic testers as \textsc{Crest} can easily cover it. The \texttt{else} branch, however, poses serious problems. In order to cover it, a well-formed input that respects all constraints should be generated; in particular the smallest length of such input, e.g., for $n$ equal to 3, is 17. The huge number of combinations to test exceeds the capabilities of the otherwise very efficient concolic tester: for $n=2$ \textsc{Crest} needs 800 seconds to generate a test, and for $n=3$ it is not able to finish within 3 hours. %Needless to say that no static analysis approach is able to handle this example, when even concolic testing, usually much more scalable than static analysis, fails.

Let us now examine the encoding of the implementation semantics in Array Folds Logic. The AFL assertions are shown in Figure~\ref{fig:example} intertwined with the source code: they encode the semantics of the preceding code lines in the SSA form. To shorten the presentation we use the following conventions: variables \texttt{i}, \texttt{a}, \texttt{pipes}, \texttt{end}, \texttt{col}, and \texttt{dashes} are represented by (SSA-indexed) logical variables $i$, $a$, $p$, $e$, $c$, and $d$ respectively; characters `\texttt{\char`\\ n}',  `\texttt{|}', `\texttt{:}', `\texttt{-}', and `\texttt{+}' by logical constants $N$, $P$, $C$, $D$, and $A$ respectively; finally, the subscript denotes the SSA index of a variable. 

The Presburger constraints such as those after line 2 are standard and we do not elaborate on them here. The first AFL-specific annotation goes after line 4: it directly reflects the loop semantics. The \Fold\ term encodes the computation of the number of pipes: they are computed in the counter $\cnt_1$, which gets its initial value equal to $p_0$, and its final value is equal to $p_1$. Similarly, array index $\idx$ is initialized with $i_0$; and its final value is asserted to be equal to $i_1$. 
 Both for counter $\cnt_1$ and for index $\idx$
 (which is a special type of a counter) 
 their initial and final values can be both constant and symbolic: in fact, arbitrary Presburger terms are allowed. %The function that is folded over an array has two branches. The first one defines a universal condition (the examined symbol is not a newline): if this condition is not met, the \Fold\ is stopped at this position. The second branch is conditional: if the examined symbol is a pipe, then increment counter $\cnt_1$.

Notice that the loop at lines 3-4 is outside of the class of loops that can be accelerated by previous approaches.
%, e.g., by the approaches\citez{Godefroid, DIFF, Sharygina}
In particular, the difficulty here is the combination of the iteration over arrays with the branching structure inside the loop. On the contrary, AFL can summarize the loop in a concise logical formula.

The next conditional statement at line 5, takes care of the optional pipe at the beginning of the input. The annotation shown demonstrates that conditional statements are also easily represented by \Fold\ terms. In particular, here the function is folded over $a$ starting from 0; the final index is unconstrained. The branch checks that the index is 0 (to prevent going further over the array), and that the symbol at this position is `\texttt{|}'. Counter $\cnt_1$ is decremented only if these two conditions are met; otherwise, the \Fold\ terminates. An equivalent encoding using only array reads is possible: $(a[0]=P \fand p_2=p_1-1) \for (a[0] \ne P \fand p_2 = p_1)$, but this encoding involves a disjunction.

The other program statements of the motivating example are encoded in a similar fashion.
%. The encoding is straightforward, and 
%can be easily automated. 
The encoding shown is for one unfolding of the \texttt{for} loop at line 10; several unfoldings are encoded similarly. We have checked the resulting proof obligations 
%for the required number of unfoldings 
with our solver for AFL formulas, called \textsc{AFolder}; 
%it was able to discharge them and generate the required test input in a fraction of a second.
it can discharge them and generate the required test input in less than 2 minutes for $n=3$.% much faster than \textsc{Crest}.

\section{Complexity}
\label{sec:complex}
%\subsection{Counter Machines}
%\label{subsec:cm}

A counter machine is a finite automaton extended by a vector
$\ctrv= (\ctr_1, \ldots, \ctr_k)$ of $k$ counters. Every counter in
$\ctrv$ stores a non-negative integer, and a counter machine can compare
it to constant, and increment/decrease its value by a constant.
% If, due to a decrement, the value of a counter drops below zero, then
%the behavior of the counter machine is undefined. 
For the formal definition of counter machines consult, e.g., \cite{Ibarra02}.

We extend counter machines to symbolic counter machines (SCMs), which accept sequences (arrays) of integers.  We denote the symbolic value of an array cell by a special integer variable $\invar$. Let $X$ be a set of integer variables, where $\invar\not\in X$. An \emph{atomic input constraint} is of the form $\invar \approx c$ or $\invar \approx x$, where $c\in \Nb$, $x\in X$, and $\approx \; \in\{<,\le,>,\ge,=, \ne\}$.   Similarly, an \emph{atomic counter constraint} is a formula of the form $\ctr_i \approx c$ or $\ctr_i \approx x$. An \emph{input constraint} (resp.\ a \emph{counter constraint})
is either a conjunction of $n\geq 1$ atomic input constraints (resp.\ atomic counter constraints), or the formula $\true$.
We denote by $\ic(X)$ (resp. $\cc_k(X)$) the set of all input constraints (resp.\ counter constraints with counters not greater than $k$) over variables in $X$. 

%\begin{definition}
%\label{def:cm}
%  A $k$-\emph{counter machine} is a tuple
%  $\scm = (\ctrv, \alphabet, \states, \tr, \init)$, where
%%\begin{itemize}
%%\item
% $\ctrv= (\ctr_1, \ldots, \ctr_k)$ is a vector of $k$ counter variables,
%%\item 
%$\alphabet$ is a finite set alphabet symbols,
%%\item 
%$\states$ is a finite set of states,
%%\item 
%$\tr \subseteq \states\times \cc_k(\emptyset)\times \alphabet\times\states\times\Zb^k$ is a transition relation,
%%\item
%and $\init \in \states$ is the initial state. A transition $(q_1,c,\alpha,q_2,\vec{\kappa}) \in \delta$ moves the counter machine from state $q_1$ to $q_2$ if the counters satisfy the constraint $c$ and the input symbol $\alpha$ is read; the counters are incremented by $\vec{\kappa}$.
%%\end{itemize}
%\end{definition}

\begin{definition}
\label{def:scm}
  A \emph{symbolic $k$-counter machine} is a tuple
  $\scm = (\ctrv, \vars, \states, \tr, \init)$, where:
\begin{itemize}
\item $\ctrv= (\ctr_1, \ldots, \ctr_k)$ is a vector of $k$ counter variables,
\item $\vars$ is a finite set of integer variables,
\item $\states$ is a finite set of states,
\item $\tr \subseteq \states\times \cc_k(X)\times \ic(X)\times\states\times\Zb^k$ is a transition relation,
%\item
\item $\init \in \states$ is the initial state.

\end{itemize}
% $\ctrv= (\ctr_1, \ldots, \ctr_k)$ is a vector of $k$ counter variables,
%$\vars$ is a finite set of integer variables,
%$\states$ is a finite set of states,
%$\tr \subseteq \states\times \cc_k(X)\times \ic(X)\times\states\times\Zb^k$ is a transition relation,
%%\item
%and $\init \in \states$ is the initial state.
\end{definition}

A transition $(q_1,\alpha,\beta,q_2,\vec{\kappa}) \in \delta$ moves the SCM from state $q_1$ to $q_2$ if the counters satisfy the constraint $\alpha$ and the  inspected array cell satisfies $\beta$; the counters are incremented by $\vec{\kappa}$, and the machine moves to the next cell. A machine is called \emph{deterministic} if $\tr$ is functional.
%, and \emph{one-way} if the tape head never moves back. 
A counter machine makes a \emph{reversal} if it makes an alternation between non-increasing and non-decreasing some counter.  A machine is
\emph{reversal-bounded} if there exists a constant $c\geq 0$ such
that on all accepting runs every counter makes at most $c$ reversal.

%%\subsection{Symbolic Counter Machines}
%We extend counter machines to symbolic counter machines (SCMs), which accept
%sequences of integers.  We denote the symbolic value of an array cell by a
%special integer variable $\invar$.  Given a set $X$ of integer
%variables (where $\invar\not\in X$), an \emph{atomic input constraint}
%is of the form $\invar \approx c$ or $\invar \approx x$, where
%$c\in \Nb$ and $x\in X$, and $\approx \; \in\{<,\le,>,\ge,=, \ne\}$.  An
%\emph{input constraint} is either a conjunction of $n\geq 1$ atomic input
%constraints, or the formula $\true$; we denote by
%$\ic(X)$ the set of all input constraints over variables in $X$.
%

\begin{definition}
\label{def:fscm}
  We define the translation of a functional constant $f$ of sort $\fsort{m}$, occurring in a formula $\phi$, as an SCM  $\scm(f) = (\ctrv, \vars, \states, \tr, \init)$. Let $G = \tuple{S,E, \gamma}$ be the edge-labeled graph for $f$ as defined in Section~\ref{afl-syntax}. Then 
$\ctrv = \{\idx,\cnt_1,\ldots,\cnt_{m-1}\}$, $\vars$ are fresh free variables for each integer term $T$ in $f$, $\states = S$, $\init = 0$, and for each edge $(s_1,s_2) \in E$, $\delta$ contains a transition from $s_1$ to $s_2$ labeled with a conjunction of all constraints labeling the edge. For each integer term $T$ in $f$ and the corresponding variable $x \in X$, we replace $T$ by $x$ in $f$, and add the assertion $(x=T)$ as a conjunction at the outermost level of $\phi$. Due to the constraint on $G$,  we have that $\scm(f)$ is reversal-bounded.
\end{definition}

%\begin{definition}
%\label{def:scm}
%  A $k$-\emph{symbolic counter machine} is a tuple
%  $\scm = (\ctrv, \vars, \states, \tr, \init)$, where
%\begin{itemize}
%\item $\ctrv= (\ctr_1, \ldots, \ctr_k)$ is a vector of $k$ counter variables,
%\item $\vars$ is a finite set of integer variables, such that $\invar\not\in X$
%\item $\states$ is a finite set of states,
%\item $\tr$ is a transition relation, which is a finite subset of 
%  \[\states\times \cc_k(X)\times \ic(X)\times\states\times\Zb^k\]
%\item $\init \in \states$ is the initial state.
%\end{itemize}
%\end{definition}

% \begin{definition}
% \label{def:scm}
% A symbolic counter machine (SCM)
% $\scm = (\ctrv, \states, \alphabet^s, \tr, \init)$ is a counter
% machine such that $\alphabet^s = \{\psi_1, \ldots, \psi_n\}$, where
% each $\psi_i$ is a finite subset of $\sc$.
% \end{definition}

Thus, we can translate a \Fold\ term into an SCM. A \emph{parallel composition} of SCMs captures the scenario when several \Fold{s} operate over the same array.

\begin{definition}
\label{def:pc}
The parallel composition (product) of two SCMs $\scm_1$ and $\scm_2$, where
$\scm_i = (\ctrv_i, \vars_i, \states_i, \tr_i, \init_i)$, is an
SCM $\scm = (\ctrv, \vars, \states, \tr, \init)$ such that:
\begin{itemize}
\item $\ctrv = \ctrv_1\ctrv_2$,
\item $\vars = \vars_1 \cup \vars_2$,
\item $\states = \states_1\times \states_2$,
\item for each pair of transitions
  $(q_i, \alpha_i, \beta_i, p_i, \vec{w}_i)\in \tr_i$, where $i=1..2$, there is the transition $\big((q_1, q_2),\alpha_1\land \alpha_2, \beta_1 \land \beta_2, (p_1, p_2),\vec{w}_1\vec{w}_2 \big)\in \tr$, which are the only transitions in $\tr$,
\item $\init = (\init_1, \init_2)$.
\end{itemize}

%
%$\ctrv = \ctrv_1\ctrv_2$,
%$\vars = \vars_1 \cup \vars_2$,
%$\states = \states_1\times \states_2$,
%$\init = (\init_1, \init_2)$,
%and 
%for each pair of transitions
%  $(q_i, \alpha_i, \beta_i, p_i, \vec{w}_i)\in \tr_i$, where $i=1..2$, there is the transition $\big((q_1, q_2),\alpha_1\land \alpha_2, \beta_1 \land \beta_2, (p_1, p_2),\vec{w}_1\vec{w}_2 \big)\in \tr$, there are no other transitions in $\tr$.

%$\tr$ is defined as follows: if
%  $(q_i, \phi_i, \psi_i, q_i, \vec{w}_i)\in \tr_i$ ($i=1..2$), then
%  \[((q_1, q_2),\phi_1\land \phi_2, \psi_1 \land \psi_2, (p_1, p_2),\vec{w}_1\vec{w}_2)\in \tr.\]

\end{definition}

%\begin{definition}
%\label{def:pc}
%The parallel composition of two SCMs $\scm_1$ and $\scm_2$, where
%$\scm_i = (\ctrv_i, \vars_i, \states_i, \tr_i, \init_i)$ is an
%SCM $\scm = (\ctrv, \vars, \states, \tr, \init)$, such that:
%\begin{itemize}
%\item $\ctrv = \ctrv_1\ctrv_2$,
%\item $\vars = \vars_1 \cup \vars_2$,
%\item $\states = \states_1\times \states_2$,
%\item $\tr$ is defined as follows: if
%  $(q_i, \phi_i, \psi_i, q_i, \vec{w}_i)\in \tr_i$ ($i=1..2$), then
%  \[((q_1, q_2),\phi_1\land \phi_2, \psi_1 \land \psi_2, (p_1, p_2),\vec{w}_1\vec{w}_2)\in \tr.\]
%\item $\init = (\init_1, \init_2)$.
%\end{itemize}  
%\end{definition}

One of the fundamental questions that can be asked about a logic concerns the size
of its models. The following lemma shows that models of bounded size are enough to check satisfiability of an AFL formula.

\begin{lemma}[Small model property]
\label{th:small}
There exists a constant $c\in \Nb$, such that an $AFL$ formula $\phi$ is satisfiable iff there exists a model $\model$ such that 
%\begin{itemize}
%\item 
a) for each integer variable $x$ in $\phi$, $\model$ maps $x$ to an integer $\leq 2^{|\phi|^c}$,
%\item
and b) for each array variable in $\phi$, $\model$ maps the variable to a
  sequence of $\leq 2^{|\phi|^c}$ integers, where each integer is $\leq 2^{|\phi|^c}$.
%\end{itemize}
\end{lemma}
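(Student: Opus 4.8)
The plan is to reduce AFL satisfiability to the nonemptiness problem for a reversal-bounded symbolic counter machine, and then invoke (or re-derive) the known small-model bound for that class. First I would normalize $\phi$: eliminate array writes by the standard read-over-write expansion (introducing fresh array variables and linking reads), push the Boolean structure into a DNF-like shape, and collapse array equalities by picking a representative per equivalence class. After this step every \Fold\ term mentions only a representative array variable. For each array variable $a$ that survives, collect all \Fold\ terms over $a$ occurring in (the chosen disjunct of) $\phi$, translate each to an SCM via Definition~\ref{def:fscm}, and form their parallel composition (Definition~\ref{def:pc}) to obtain one reversal-bounded SCM $\scm_a$ per array. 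The initial/final index and counter values of each \Fold, together with the remaining Presburger part of $\phi$, become a single existential Presburger constraint $\psi$ over: the free integer variables of $\phi$, the array lengths $|a|$, the fresh variables $X$ introduced for the integer subterms inside functions, and auxiliary variables recording the number of times each transition of each $\scm_a$ is taken.

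The heart of the argument is then a Parikh-image / semilinear-set computation: for a reversal-bounded counter machine, the set of tuples (input length, transition-count vector, final counter values) that are realizable along a run is effectively semilinear, hence definable by an existential Presburger formula of size polynomial in $|\scm_a|$ — here I would cite the encoding of reversal-bounded counter machine runs into Presburger arithmetic (Hague–Li) and the Parikh-image construction (Seidl et al.), both of which are invoked earlier in the paper. Conjoining these Presburger formulas for all arrays with $\psi$ yields a single existential Presburger sentence $\Psi_\phi$ whose size is polynomial in $|\phi|$ and which is satisfiable iff $\phi$ is. Now apply the classical small-model theorem for existential Presburger arithmetic: a satisfiable existential Presburger sentence of size $N$ has a solution in which every variable is bounded by $2^{N^{O(1)}}$. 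Translating back, the free integer variables of $\phi$ get values $\le 2^{|\phi|^c}$, establishing clause (a); each array length $|a|$ is likewise $\le 2^{|\phi|^c}$, which bounds clause (b)'s length requirement.

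The remaining obstacle — and the step I expect to be the most delicate — is recovering an actual array (a concrete sequence of integers with bounded entries) from the abstract Presburger witness, i.e.\ proving the second half of clause (b). The Presburger solution only fixes the transition counts of $\scm_a$ and the values of the variables $X$; it does not directly give array cell contents. I would argue as follows: from a satisfying transition-count vector one can, by an Eulerian-path / flow argument on the finite control graph of $\scm_a$, reconstruct an actual accepting run of length $|a| \le 2^{|\phi|^c}$; each step of that run fires a transition whose input constraint $\beta \in \ic(X)$ is a conjunction of atomic constraints comparing the cell value $\invar$ to constants and to variables in $X$. Since the $X$-values are already bounded by $2^{|\phi|^c}$ and the constants appearing in $\phi$ are at most $|\phi|$ in magnitude, any satisfiable such $\beta$ is satisfied by some integer of absolute value $\le 2^{|\phi|^{c'}}$ (pick, e.g., one of the finitely many ``interesting'' threshold values or a value a bounded distance from them). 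Filling cell $j$ with any such witness for the $j$-th transition's constraint produces the desired bounded array, and enlarging $c$ absorbs the polynomial blow-ups, completing the proof.
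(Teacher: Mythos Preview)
Your approach has a genuine size-accounting gap that breaks the bound. The parallel composition $\scm_a$ of the SCMs for all folds over a single array $a$ can have exponentially many states and transitions in $|\phi|$: with $m$ folds over $a$ the product has up to $|\phi|^m$ transitions, and $m$ itself can be $\Theta(|\phi|)$. The Hague--Lin and Seidl et al.\ encodings are polynomial in $|\scm_a|$, not in $|\phi|$, so your $\Psi_\phi$ is in general of size $2^{|\phi|^{O(1)}}$, not polynomial as you assert. Feeding an exponentially large sentence into the existential-Presburger small-model theorem then yields only a doubly-exponential bound $2^{2^{|\phi|^{O(1)}}}$ on the witness, which does not establish the lemma. (Note that your argument, if it went through, would actually place AFL satisfiability in $\NP$; the paper shows it is $\PSPACE$-hard via intersection of DFAs, so a polynomial-size Presburger encoding of the product cannot exist unless $\NP=\PSPACE$.)

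The paper sidesteps this by a different construction. It first fixes an arbitrary model $\sigma$ and the concrete accepting run $Tr$ of the product machine on $\sigma(a)$; this also replaces your DNF step without blowup, since one simply records the truth value under $\sigma$ of each atomic subformula. It then applies Ibarra's pumping analysis to $Tr$: partition $Tr$ into polynomially many mode segments, within each segment mark one occurrence of every transition and factor the rest into short removable loops grouped by (start, end, net counter effect). The resulting integer linear program has \emph{exponentially many variables} (one per loop equivalence class per mode) but only \emph{polynomially many constraints} (counter bookkeeping per mode, the non-fold conjuncts, linking equations, region bounds, and one satisfiability witness per input constraint). Papadimitriou's ILP bound is linear in the number of variables and singly exponential only in the number of constraints and the bit-size of the coefficients, so it still gives a $2^{|\phi|^{O(1)}}$ solution. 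This asymmetry between variables and constraints is exactly the leverage your route through a generic Presburger formula loses. Your final paragraph---recovering a concrete array by reconstructing a run from the transition counts and instantiating each cell with a bounded witness for its input constraint---is fine and matches what the paper does at the end.
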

\begin{proof}[sketch; see the appendix for the full proof]
  One direction of the proof is trivial.  For the other direction,
  assume that $\phi$ has a model $\model$.  We construct a formula
  $\psi$ that is a conjunction of all atomic formulas of $\phi$: in positive polarity
  if $\model$ satisfies the atomic formula, and in negative
  otherwise.  Let $s=|\psi|$, and note that $s\leq 3|\phi|$.   We observe that 1) $\model$ is a model of $\psi$, 2) every
  model of $\psi$ is a model of $\phi$. In the remaining part of the
  proof we show that $\psi$ has a small model, and as a consequence so
  does $\phi$.

  Let $a$ be some array in $\psi$.  We translate each \Fold\ term over
  $a$ to an SCM $\scm_j$ as in Definition~\ref{def:fscm}; let SCM
  $\scm$ be the product of all $\scm_j$.  We extend the technique of
  \cite{Ibarra81} to show that there exist a sufficiently short run of
  $\scm$. Under the interpretation $\model$, all variables in counter
  constraints become constants.  Let $\vec{c}=(c_1, \ldots c_n)$ be a
  non-decreasing vector of constants that appear in the counter
  constraints of $\scm$ after fixing $\model$. Vector $\vec{c}$ gives
  rise to the set of regions
  \[\mathcal{R}= \{[0,c_1],[c_1,c_1],[c_1+1,c_2-1],[c_2,c_2],\ldots,[c_l,\infty]\}.\]
  The size of $\mathcal{R}$ is at most $2\dim(\vec{c})+1 \leq 3s$.  A
  mode of $\scm$ is a tuple in $\mathcal{R}^k$ that describes the
  region of each counter.  Let us observe that each counter can
  traverse at most $|\mathcal{R}|$ modes before it makes an
  additional reversal. Thus, $\scm$ in any run can traverse at most
  $max=r\cdot k\cdot|\mathcal{R}|\leq \mathcal{O}(s^3)$ different
  modes.
  
  We take some accepting run $Tr$ of $\scm$ that traverses at most
  $max$ modes, and partition sequences of transition in $Tr$ into
  equivalence classes.  We create an integer linear program LP
  that encodes an accepting run of $\scm$ that traverses at most $max$
  modes, as well as all non-fold constraints of $\psi$.  The variables
  of LP correspond to 1) the integer variables of $\psi$, 2) the
  counter values of $\scm$, 3) the number of times sequences from each equivalence class are taken, and 4) the solutions to each input constraint of
  $\scm$.

  We show that LP has a  solution $p$, where each variable is
  at most $\leq 2^{|\phi|^c}$, for a fixed $c$.
  We use $p$ to construct  a small model for  $\psi$.  
  From $p$ we immediately get interpretation for integer variables of
  $\psi$.  Solution $p$ implies that there is an accepting run of
  $\scm$ of length at most $\leq 2^{|\phi|^c}$, which also gives a
  bound on the length of the input array. Finally, for every array
  cell we may use a solution to the specific input constraint.  \hfill $\qed$
\end{proof}

As a consequence of Lemma~\ref{th:small} we obtain a result on the
complexity of AFL satisfiability checking.

\begin{theorem}
\label{th:comp}
The satisfiability problem of AFL is $\PSPACE$-complete.
\end{theorem}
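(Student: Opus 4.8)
The plan is to prove membership in $\PSPACE$ and $\PSPACE$-hardness separately. The upper bound will rest on the small model property (Lemma~\ref{th:small}), and the lower bound on a reduction from the emptiness problem for an intersection of finite automata. I expect the delicate point to be the correctness of the polynomial-space simulation for the upper bound: one must be sure that array reads, array writes, (dis)equalities of arrays, and chained \Fold{s} can all be handled within the space budget without ever materializing a whole (exponentially long) array — the crucial enabling fact being that \Fold\ evaluation is left-to-right and lock-step.

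\textbf{Upper bound.} Here I would combine Lemma~\ref{th:small} with Savitch's theorem, i.e.\ $\mathbf{NPSPACE}=\PSPACE$, and build a nondeterministic polynomial-space procedure. The structural observation is that, although a small model may contain an exponentially long array, that array never has to be stored all at once: every \Fold\ advances its implicit index $\idx$ by exactly one per step (semantic rule 5), so all \Fold{s} over the same equivalence class of arrays move through it synchronously, consuming its cells strictly left to right. Concretely, the procedure first guesses, within the bounds of Lemma~\ref{th:small}, a consistent valuation of every integer-sorted subterm of $\phi$ (in particular of every array-read term and every array length), together with the intended truth value of each atomic subformula and, for each array disequality, a witness position and value. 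It then processes the array equivalence classes one at a time; for the representative $a$ it scans positions $j=0,1,\dots,|a|-1$, at each step guessing the cell value $a[j]$, checking it against the guessed value of every read term $a[T]$ with $T=j$, and performing one evaluation step of every \Fold\ over $a$ (or over a \var{write}-term built from $a$) whose starting index is $\le j$ and which is still running — updating that \Fold's control state $\st$ and its counters and detecting \Break\ or out-of-bounds termination. \Fold\ terms that occur inside the initial vector of another \Fold\ are resolved first, in topological order of the (acyclic, by the grammar) dependency relation. Finally the procedure checks that the guessed valuation together with the computed \Fold\ outputs makes each atomic subformula take its guessed value and that $\phi$ holds propositionally. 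At any moment the stored configuration is a polynomially long list of integers of polynomially many bits each — each \Fold\ contributes its $O(m)$ counters of magnitude $\le 2^{|\phi|^c}$, its state, plus the scan index — so the procedure runs in polynomial space, and $\mathbf{NPSPACE}=\PSPACE$ gives the bound.

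\textbf{Lower bound.} I would reduce from the nonemptiness of $L(A_1)\cap\dots\cap L(A_n)$ for DFAs $A_i$ over a common alphabet, which is $\PSPACE$-complete. Encoding letters and states as integers, I build for each $A_i$ a functional constant $f_i$ of sort $\fsort{1}$ whose branches, at every non-final position ($\idx<|a|-1$), simulate the transition function of $A_i$ by reassigning $\st$, and which at the last position ($\idx=|a|-1$) perform \Skip\ from an accepting state of $A_i$ — so that $\idx$ reaches $|a|$ — and otherwise fall through to the implicit catch-all \Break. Since these guards are pairwise mutually exclusive and $f_i$ uses no explicit counters, only $\st$ and the always-incremented index, it meets the syntactic side conditions of AFL, including the monotonicity requirement on the SCCs of the control flow graph. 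For $\phi \equiv |a|\ge 1 \fand \bigwedge_{i=1}^{n}\big(\FoldPI{a}{0}{f_i}=|a|\big)$ over a single shared array $a$, one checks that $\phi$ is satisfiable iff $A_1,\dots,A_n$ accept a common nonempty word; membership of the empty word is decided in polynomial time by inspecting the initial states. The reduction is polynomial, so AFL satisfiability is $\PSPACE$-hard.

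Combining the two bounds proves the theorem. The hardness direction is routine once the construction is made to respect AFL's mutual-exclusivity and SCM-monotonicity constraints; the real work is the bookkeeping in the upper bound, where the lock-step, left-to-right semantics of \Fold\ is what keeps the simulation within polynomial space.
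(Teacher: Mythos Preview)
Your proposal is correct and follows essentially the same approach as the paper: membership via the small-model property (Lemma~\ref{th:small}) combined with a nondeterministic polynomial-space guess-and-simulate procedure and Savitch's theorem, and hardness via reduction from the intersection-nonemptiness problem for DFAs. The only cosmetic difference is in the hardness encoding---the paper tracks each DFA with a single counter whose parity signals acceptance, whereas you use the control variable $\st$ to store the DFA state and detect acceptance by whether the fold's index reaches $|a|$; both encodings are polynomial and respect the syntactic constraints of AFL.
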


\begin{proof}
  \emph{Membership.} By Lemma~\ref{th:small}, if an AFL formula
  $\phi$ is satisfiable, then it has a model where integer variables
  have value $\leq 2^{|\phi|^c}$, and arrays have  length
  $\leq 2^{|\phi|^c}$, and where each array cell stores a number
  $\leq 2^{|\phi|^c}$.  A non-deterministic Turing machine can use a
  polynomial number of bits to: 1) guess the value of integer
  variables and store them using $|\phi|^c$ bits, 2) guess one-by-one
  the value of at most $2^{|\phi|^c}$ array cells, and simulate the
  \Fold{s}. The Turing machine needs $|\phi|^c$ bits for counting the
  number of simulated cells.  The maximum constant used in a counter
  increment can be at most $2^{|\phi|}$.  Then, the maximal value a \Fold\
  counter can store after traversing the array is at most
  $2^{|\phi|^{c+1}}$, therefore polynomial space is also
  sufficient to simulate \Fold\ counters.

  \emph{Hardness.} We reduce from the emptiness problem for
  intersection of deterministic finite automata, which
  is $\PSPACE$-complete \cite{Kozen77}. We are given a sequence $A_1,\ldots, A_n$ of
  deterministic finite automata, where each automaton $A_i$ accepts the
  language $\mathcal{L}(A_i)$.  The problem is to decide whether
  $\bigcap_{i=1}^{n} \mathcal{L}(A_i)\neq\emptyset$.  We simulate
  automata $A_i$ with a \Fold\ expression $\Fold_{a}^i$ over a single
  counter, where input constraints correspond to the alphabet symbols
  of the automata.  The expression $\Fold_{a}^i$ returns an even
  number on array $a$ if and only if the interpretation of $a$
  represents a word in $\mathcal{L}(A_i)$.  To check emptiness of
  the automata intersection, it is enough to check whether there exists an
  array such that all folds $\Fold_{a}^1,\ldots,\Fold_{a}^n$ return an
  even number.  The reduction can be done in polynomial time.  $\qed$
\end{proof}

\subsection{Undecidable Extensions}
\label{sec:undecidability}

%In this section w
We show that two natural extensions to our logic lead to undecidability.

\begin{theorem}
\label{hilbert}
Array Folds Logic with an $\exists^*\forall^*$ quantifier prefix is undecidable.
\end{theorem}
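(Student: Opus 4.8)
The statement to prove is that AFL extended with an $\exists^*\forall^*$ quantifier prefix is undecidable. The plan is to reduce from Hilbert's Tenth Problem: given a Diophantine equation $P(x_1,\dots,x_n) = 0$ over the naturals, I build an $\exists^*\forall^*$-AFL formula that is satisfiable iff $P$ has a solution. The key idea is that \Fold\ terms already let me count occurrences of symbols in an array, and a single universally quantified index variable lets me impose that an array has a prescribed shape that encodes a tuple of natural numbers together with the intermediate products needed to evaluate $P$.

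\textbf{Key steps.}
\emph{Step 1 (normal form for the polynomial).} First I would rewrite $P = 0$ as a system of equations over naturals in which every equation is of one of the forms $x = y + z$, $x = y \cdot z$, or $x = c$ for a constant $c$; this is the standard Davis–Putnam–Robinson reduction of an arbitrary Diophantine equation to a conjunction of additions and multiplications, introducing auxiliary existentially quantified variables. Addition and constant equations are already directly expressible in the Presburger part of AFL, so the crux is encoding a single multiplication $x = y \cdot z$.

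\emph{Step 2 (encoding multiplication via a fold with a universal quantifier).} To encode $x = y\cdot z$ I would existentially guess an array $a$ together with integers $y,z,x$, and use a universally quantified index $i$ to force $a$ to consist of $y$ consecutive ``blocks'', each block being a run of exactly $z$ marker symbols followed by a separator, so that $|a| = y(z+1)$ and the total number of marker symbols is exactly $y\cdot z$. The universal quantifier over $i$ is exactly what lets me say ``between the $j$-th and $(j{+}1)$-st separator there are exactly $z$ markers'' in a uniform way; a \Fold\ then counts the total number of markers and the formula asserts this count equals $x$, and counts separators and asserts this equals $y$. Because the only quantifier alternation used is $\exists^*$ (guessing all arrays and all variables of the Diophantine system) followed by $\forall^*$ (the shape constraints on the arrays), the whole formula stays in the $\exists^*\forall^*$ fragment. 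Conjoining one such gadget per multiplication, sharing existential variables across gadgets, yields a formula satisfiable iff the system — hence $P=0$ — has a natural solution.

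\emph{Step 3 (correctness).} I would then argue soundness and completeness of the reduction: any model of the AFL formula reads off, from the guessed integer variables, a solution of the Diophantine system (the fold/universal constraints guarantee the multiplicative relations hold); conversely any natural solution is realized by choosing the arrays to have the canonical block shape. Since Hilbert's Tenth Problem is undecidable, so is satisfiability of $\exists^*\forall^*$-AFL.

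\textbf{Main obstacle.}
The delicate part is Step 2: getting a \emph{single} block of universal quantifiers (no alternation with an inner existential) to pin down the array shape precisely enough that the fold-count equals the intended product. The natural phrasing ``there exists a separator after every $z$ markers'' is $\forall\exists$, which is the wrong quantifier pattern; I expect to avoid this by instead using a purely universal statement relating $a[i]$ to $a[i+z+1]$ and to the position $i \bmod (z+1)$ — i.e.\ asserting periodicity of the marker/separator pattern with period $z+1$ and total length $y(z+1)$ — which can be written with only universally quantified index variables plus existentially guessed $y,z$. Making this periodicity constraint airtight (including boundary conditions at the ends of the array and the interaction between the $\bmod$ computation and the available arithmetic on indices) is where the real work lies; everything else is bookkeeping analogous to the $\texttt{assert}(j{=}k)$ example and the Presburger encoding already used in the paper.
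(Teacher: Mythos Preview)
Your high-level plan --- reduce from Hilbert's Tenth Problem by encoding each multiplication $x = y \cdot z$ with an array of periodic block structure, enforced by a universally quantified index and read out by folds --- is exactly the paper's route. The gadget is organized differently, though. The paper does not split the shape constraint into a periodicity clause $\forall i.\, a[i] = a[i{+}z{+}1]$ plus a separate first-period check; instead it puts a \Fold\ \emph{under} the universal quantifier, with the fold's starting index set to the universally quantified $j$, and asserts that this fold counts exactly one $1$ in the length-$y$ window beginning at $j$. Together with a single outer fold asserting $|a| = x$ and that the total number of $1$'s equals $z$, this forces the pattern $(0^{y-1}1)^z$ and hence $x = yz$, in two lines.

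Your variant can be completed, but two remarks. First, drop the reference to ``$i \bmod (z{+}1)$'': since $z$ is a variable, remainder by $z{+}1$ is not expressible in Presburger arithmetic and hence not in AFL. You do not actually need it --- periodicity $\forall i.\, a[i] = a[i{+}z{+}1]$, a fold pinning down the first period, and the boundary condition $a[|a|{-}1] = 1$ already suffice --- but the write-up must not appeal to mod-by-variable. Second, the paper's device of starting a fold at the universally quantified position is precisely what dissolves the boundary bookkeeping you flag as the main obstacle; it is worth adopting, as it turns the multiplication encoding into essentially a one-liner.
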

\begin{proof}
We prove by a reduction from Hilbert's Tenth Problem \cite{Hilbert}; since addition is already in the logic, we only show how to encode multiplication. The following $\exists^*\forall^*$ AFL formula has a model iff array $a$ is a repetition of $z$ segments, and each segment is of length $y$ and has the shape $00...01$; thus, it asserts that $x = y\cdot z$:
%\vspace{-0.5mm}
\begin{gather*}
 |a|=x \fand \FoldPI{a}{\vect{0;0}}{\vect{\elm=0 \Then \Skip ;\elm=1 \Then \cnt_1\pp}} = \Vect{|a|;z} \fand\\
 \forall j\,.\, 0\le j <|a| \implies \FoldPI{a}{\vect{j;0}}{\vect{\idx\,\le\,j+y \fand \elm=0 \;\Rightarrow\; \Skip;\idx\,\le\,j+y \fand \elm=1 \;\Rightarrow\; \cnt_1\pp}} = \Vect{*;1}\qquad \qed
\end{gather*}

In \cite{Buchi88}, the following is proved about the theory of  concatenation:

\begin{theorem}[\cite{Buchi88}, Corollary 4; see also \cite{Furia10}, Proposition 1] \\
Solvability of equations in the theory $\tuple{\{1,2\}^*,e,\circ,\var{Lg}_1,\var{Lg}_2}$, where $\var{Lg}_p(x) \equiv \{y \in p^* \;|\; y \text{ has the same number of }p\text{'s as } x\}$,  is undecidable.
\end{theorem}

\begin{corollary}
Extension of AFL with concatenation operator $\circ$ is undecidable.
\end{corollary}
\begin{proof}
For an array $x$, we can define another array $\var{Lg}_1(x)$ in AFL as follows:
%\vspace{-0.5mm}
\[
\Vect{|x|;|\var{Lg}_1(x)|} = \FoldPI{x}{\vect{0;0}}{\vect{\elm = 1 \Then \cnt_1\pp;\elm \ne 1 \Then \Skip}} \fand \Vect{|\var{Lg}_1(x)|} = \FoldPI{\var{Lg}_1(x)}{\vect{0}}{\vect{\elm = 1 \Then  \Skip}} 
\qquad \qed
\]
\end{proof}

%\begin{gather*}
% a[0]=1 \;\land \; |a|=x\\
% \forall i\,.\, i<|a|-y \implies \Fold\!\bm{\vect{j=0;c=0}}\!\bm{\vect{j++;a=1 \;\Rightarrow\; c++}} = \bm{\vect{y;1}} \;\land \\
% \Fold\!\bm{\vect{c=0}}\!\bm{\vect{a=1 \;\Rightarrow\; c++}} = \bm{\vect{z}}
%\end{gather*}
\end{proof}

%%% Local Variables:
%%% mode: latex
%%% TeX-master: "main"
%%% End:

\section{Decision Procedure}
\label{sec:procedure}
In Section~\ref{sec:complex} we described how a non-deterministic Turing
machine can decide AFL satisfiability in $\PSPACE$. Now we present
a deterministic procedure that translates AFL formulas to equisatisfiable 
quantifier-free Presburger formulas. As a
consequence of the procedure, we show that under certain restrictions
satisfiability of AFL is $\NP$-complete.

\textbf{Deterministic procedure} We are given an AFL formula $\phi$
such that there are at most $m$ \Fold{s} over each array; clearly
$m$ can be at most $|\phi|$. We translate $\phi$ to the quantifier-free Presburger formula
$\psi = \psi_{n} \land \psi_{e} \land \psi_{l}$. 
For the procedure we assume that there exists a fixed order $x_1\leq \cdots \leq x_n$ on variables that appear in the counter constraints.

\emph{Formula $\psi_n$.} The formula $\psi_n$ is the part of $\phi$ that does not contain \Fold{s}. 

\emph{Formula $\psi_e$.} For an array $a_j$ in $\phi$, let
$F_j = \{\Fold^1_{a},\ldots,\Fold^m_{a}\}$ be the set of \Fold{s} in $\phi$
over $a_i$. We translate each $\Fold^i_{a}\in F_j$ to a symbolic
counter machine $\scm^i_j$.  Each $\scm^i_j$ has at most $|\phi|$
transitions, and the sum of the counters and the number of reversals among all $\scm^i_j$
is at most $|\phi|$. Next, we construct the symbolic counter machine
$\scm_j$ as the product of all machines $\scm^i_j$.  The machine
$\scm_j$ has at most $k=|\phi|$ counters, $t=|\phi|^m$ transitions and
makes at most $r=|\phi|$ reversals.

We translate the reachability problem of $\scm_j$ to the quantifier-free
Presburger formula $\psi_e^j$ by applying an extension of the method
described in \cite{HagueL11}.  In formula $\psi_e^j$, two
configurations of $\scm_j$ are described symbolically: initial
$\conf$, and final $\conf'$.  The formula $\psi_e^j$ is satisfiable iff there is an array $a_j$ such that $\scm_j$ reaches $\conf'$ from $\conf$ on reading $a_j$. The formula $\psi_e$ is the
conjunction $\psi_e^j$ for all arrays $a_j$.

The formula $\psi_e^j$ consists of two parts
$\psi_e^j = \psi_p^j \land \psi_c^j$.  For simplicity we assume that
the counter constrains of $\scm_j$ are defined only over variables
$\{x_1,\cdots,x_n\}$.  By assumption, there is a fixed order
$x_1\leq \ldots \leq x_n$, which gives rise a to the set of
$\leq 2|\phi|+1$ regions
$\mathcal{R}=
\{[0,x_1],[x_1,x_1],[x_1+1,x_2-1],\cdots,[c_l,\infty]\}.$ As an optimization, we construct regions separately
for each counter, which allows us to obtain a tighter bound on the
number of regions that need to be encoded.

Each counter may traverse at most $|\mathcal{R}|$ regions before it
makes a reversal, so an accepting computation of $\scm_j$ traverses at
most $max = r\cdot k\cdot|\mathcal{R}| = \mathcal{O}(|\phi|^3)$ modes.
We construct an NFA $\mathcal{A}_j$ by making $max$ copies of the
control-flow structure of $\scm_j$.  Every run of $\mathcal{A}_j$
gives a correct sequence of states in $\scm_j$, but may violate
counter constraints.  By using the procedure of
\cite{seidl2004counting} we can encode the Parikh image of
$\mathcal{A}_j$ as the formula $\psi_j^p$ that is polynomial in the
size of $\mathcal{A}$.  Similar to \cite{HagueL11}, the formula
$\psi^j_c$ puts additional constraints on the Parikh image to ensure
that by executing the transitions of $\mathcal{A}_j$ we obtain counter
values that satisfy the counter constraints of $\scm_j$.

The size $\psi_e^j$ is of the order
$\mathcal{O}(|\phi|^3 t)=\mathcal{O}(|\phi|^{m+3})$.  The formula
$\psi_e$ is the conjunction of formulas $\psi_e^j$ for each array
$a_j$.  There can be at most $|\phi|$ arrays, so the size of $\psi_e$
is $\mathcal{O}(|\phi|^{m+4})$.

\emph{Formula $\psi_l$.} Finally, formula $\psi_l$ links the initial
and final configurations in $\psi_e$ to the variables in $\psi_p$.

\emph{Formula size.} The size of the formula $\psi$ is
$\mathcal{O}(|\phi|^{m+4})$.  By keeping $m$ constant, the encoding
size is polynomial in the size of the AFL formula $\phi$.

\textbf{Restricted fragment of AFL} We write $m$-AFL for formulas that
have at most $m$ \Fold\ expressions per array. As a consequence of the
deterministic decision procedure, restriction on $m$ reduces the
complexity of deciding satisfiability.

\begin{lemma}
\label{th:rescomp}
The $m$-AFL satisfiability problem, for a fixed $m$, is $\NP$-complete.
%For a fixed constant $m$, the satisfiability problem of an $m$-AFL formula is $\NP$-complete.
\end{lemma}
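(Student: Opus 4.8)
The lemma refines Theorem~\ref{th:comp} under the bound on the number of \Fold{s} per array, and I would establish membership and hardness separately.

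\emph{Membership in $\NP$.} The plan is to promote the deterministic translation of Section~\ref{sec:procedure} to a nondeterministic polynomial-time algorithm. The only ingredient of that translation not already fixed by $\phi$ is the linear order $x_1\le\cdots\le x_n$ on the variables occurring in counter constraints; since there are at most $|\phi|$ such variables, an order is described by $\mathcal{O}(|\phi|\log|\phi|)$ bits. So I would first guess such an order, then run the translation deterministically to produce the quantifier-free Presburger formula $\psi$. For fixed $m$ the established size bound $\mathcal{O}(|\phi|^{m+4})$ is polynomial in $|\phi|$, and the construction (enumeration of the $\mathcal{O}(|\phi|)$ regions and $\mathcal{O}(|\phi|^3)$ modes, the NFA $\mathcal{A}_j$ obtained by copying the control structure of the product machine $\scm_j$, and the Parikh-image encoding of \cite{seidl2004counting}) runs in polynomial time. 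Since $\phi$ is satisfiable iff $\psi$ is satisfiable for at least one such order, and since satisfiability of quantifier-free Presburger arithmetic lies in $\NP$ (a satisfiable instance has a solution whose bit-length is polynomial in the formula size), the algorithm finishes by guessing, in addition, a satisfying assignment of $\psi$ and verifying it in polynomial time. This places $m$-AFL satisfiability in $\NP$.

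\emph{$\NP$-hardness.} This direction is immediate and uses no \Fold{s} at all: every Boolean combination of linear (in)equalities over integer variables is already an AFL formula without any \Fold\ term, hence a $0$-AFL and a fortiori an $m$-AFL formula for every $m\ge 0$. Satisfiability of such formulas --- quantifier-free Presburger arithmetic --- is $\NP$-hard, which I would witness by the textbook reduction from \textsc{Subset-Sum} (for each input number $a_i$ introduce a variable $c_i$, assert $c_i=0\lor c_i=a_i$, and assert $\sum_i c_i=t$) or from $3$-SAT. Therefore $m$-AFL satisfiability is $\NP$-hard.

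\emph{Where the work is.} Hardness is essentially free; the content sits on the membership side, and the point that needs care is that every step of the decision procedure stays polynomial once $m$ is frozen --- that the product machine $\scm_j$ with its $|\phi|^m$ transitions, the $\mathcal{O}(|\phi|^3)$ bound on traversed modes, and the Parikh encoding compose into a formula of size $\mathcal{O}(|\phi|^{m+4})$, and that committing to a single guessed variable order loses no models (so that one need not disjoin over all $n!$ orders). Both of these claims are already argued when the deterministic procedure is described above, so here they only have to be invoked; the $\NP$ bound then follows from the classical $\NP$ bound for satisfiability of quantifier-free Presburger formulas.
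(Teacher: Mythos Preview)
Your proposal is correct and follows the same approach as the paper: membership via the polynomial-size translation of Section~\ref{sec:procedure} (which is $\mathcal{O}(|\phi|^{m+4})$ for fixed $m$) combined with the $\NP$ bound for quantifier-free Presburger arithmetic, and hardness via the observation that every quantifier-free Presburger formula is already a $0$-AFL formula. Your treatment of the variable order---guessing it nondeterministically rather than assuming it fixed---is a useful clarification that the paper's one-line proof glosses over, but the overall strategy is the same.
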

\begin{proof}
  Membership follows from the decision procedure above.
  For hardness observe that any quantifier-free Presburger formula is an $0$-AFL formula.
$\qed$
\end{proof}

\textbf{Model generation} Given a Presburger encoding $\psi$ of an AFL
formula $\phi$, we may use the solution to $\psi$ to generate a model
of $\phi$. The solution to $\psi$ immediately gives us interpretation
for the integer variables in $\phi$. To obtain an interpretation for the array
variables in $\phi$, we observe that folds are implicitly encoded in
$\psi$ as counter machines, and that the solution to $\psi$ describes
the Parikh vector for each machine. We use the method of
\cite{seidl2004counting} to get a concrete sequence of transitions in
each counter machine that produces the specific Parikh vector. We construct a multigraph by
repeating each transition in $\mathcal{A}_j$ according to its Parikh image, and
then find an Eulerian path in the multigraph.  From the sequence of
transitions in counter machines, and the interpretation of input
constraints in $\psi$ we obtain an interpretation for the arrays in $\phi$.

%%% Local Variables:
%%% mode: latex
%%% TeX-master: "main"
%%% End:

%\section{Applications in Testing and Verification}
%\label{sec:applications}

%\section{Extensions}
%\label{sec:extensions}

\section{Experiments}
\label{sec:experiments}

We implemented the decision procedure described in
Section~\ref{sec:procedure} in a prototype tool \textsc{AFolder}; the
tool is available at \cite{tool}. The tool is written in \texttt{C++}
and uses \textsc{Z3}~\cite{z3} as the solver for Presburger
formulas. We evaluated our decision procedure on a number of testing
and verification tasks described below.

% \subsection{Optimizations}
% \label{sec:app_opt}

% We evaluated our
% decision procedure on the Markdown program from
% Section~\ref{sec:example}, a number of  verification task from
% \textsc{SvComp}~\cite{svcomp15}, a piece of code from the
% \textsc{Linux} kernel \cite{perf-numa}, and the histogram example from 

%\subsection{Results}
The experimental results are shown in Table \ref{tab:results}; all
experiments were performed on a Ubuntu-14.04 64-bit machine running on
an Intel Core i5-2540M CPU of 2.60GHz.  For every example we report
the size $|\phi|$ of the AFL formula measured as ``the number of
logical operators'' + ``the number of branches in folds.''  The table
also shows the number of \Fold\ expressions in a formula, and the
maximum number of folds per array (MFPA). Next, we report the time for
translating the problem to a Presburger formula, the time for solving the
formula, and whether the formula is satisfiable. If this is the case, we report the length of a satisfying array generated
by our tool; in case of several arrays, we show the longest.

\begin{table}[tp]
\caption{Experimental results for \textsc{AFolder}.}
% MFPA is the maximum number of folds per array. }
\label{tab:results}
\centering
\resizebox{\columnwidth}{!}{
\begin{tabular}{l|cccccccc} 
\hline
\multicolumn{1}{c|}{Example} 		& \quad\;$|\phi|$\;\quad &  \quad folds \quad& \quad MFPA\quad &\quad transl.\ time\quad&\quad solving time\quad & \quad result \quad& array length\\
\hline
Markdown(1)		&  62   & 6 	& 3 & $<1s$	& $<1s$ & sat & $8$ \\
Markdown(2)		&  69   & 7 	& 4 & $1s$	& $<1s$ & sat & $14$ \\
Markdown(3)		&  76   & 8 	& 5 & $1.3s$	& $79$s & sat & $17$ \\
\hline
perf\_bench\_numa(10)	& 93	& 10	& 1 & $<1$s	& $<1$s	 & sat & $100$\\
perf\_bench\_numa(20)	& 183	& 20	& 1 &$<1$s	&$<1$s	 & sat & $100$\\
perf\_bench\_numa(40)	& 363	& 40	& 1 &$<1$s	&$<1$s	 & sat & $100$\\
\hline
standard\_minInArray 	& 10	&  3	& 3 & $<1$s	&$<1$s	& unsat & - \\
linear\_sea.ch\_true 	& 13	&  3	& 3 & $<1$s	& $<1$s	& unsat & - \\
array\_call3		& 11	&  2	& 3 & $<1$s	&$<1$s	& unsat & - \\
standard\_sentinel	& 14	&  3	& 3 & $<1$s	&$<1$s	& unsat & - \\
standard\_find		& 11	&  3	& 3 & $<1$s	&$<1$s	& unsat & - \\
  standard\_vararg	& 11	&  3	& 3 & $<1$s	& $<1$s	& unsat & - \\
\hline
histogram(8)		& 58	& 8  	& 8  & $<1$s 	&$1.3$s & sat	& 9 	\\
histogram(9)		& 65	& 9  	& 9  & $<1$s 	&$6.9$s & sat	& 10 	\\
histogram(10)		& 72	& 10 	& 10 & $2$s	&$55$s  & sat	& 11 	\\
histogram(11)		& 79	& 11 	& 11 & $8$s	& $368$s & sat	& 12 	\\
histogram\_unsat(11)	& 80	& 11 	& 11 & $9$s	&$19$s & unsat	& -	\\
\hline
\end{tabular}}
%\vspace*{-2mm}
\end{table}

\textbf{Markdown} This program is described in
Section~\ref{sec:example}. The experiments are parametrized by the required number $n$ of columns in the input.

\textbf{perf\_bench\_numa} This example is part of a benchmark program
for non-uniform memory access (NUMA) \cite{perf-numa}.  The program
maintains a list of threads, and for each thread a separate array of size
$100$ that describes processors assigned to the thread. The data is
processed in a nested loop: the outer loop iterates over threads, and
the inner loop counts the number of assigned processors. 
The outer
loop also maintains the minimum, and maximum number of processors
assigned to any thread.  
We model a testing scenario like in
 Section~\ref{sec:example}, where a symbolic execution tool unrolls the
 outer loop $n$ times, and the inner loop is summarized by a fold
 expression. 
The testing goal is to provide a valid processor mapping such that
each thread is assigned to exactly one processor.
 In Table \ref{tab:results} we show results for this benchmark parametrized by
the number $n$ of threads.  The example scales well, since there a
 single fold per each processor array (see Lemma~\ref{th:rescomp}).

\textbf{\textsc{SV-COMP}} Examples ``standard\_minInArray'' to
``standard\_vararg'' are taken from the \textsc{SV-COMP} benchmarks
suite~\cite{svcomp15}.  They model simple verification problems for
loops, such as finding the position of an element in array,
finding the minimum, or counting the number of positive elements. We
model these programs as formulas that are unsatisfiable if the
program is safe.  Although the programs are simple, most verification
tools competing in \textsc{SV-COMP} fail to prove their safety.

\textbf{histogram} We performed experiments on the histogram example 
in Section~\ref{sec:express}, parametrized by the number of range
values. We observe that solving time grows rapidly with the number of
folds.  Example ``histogram\_unsat'' is an unsatisfiable variation that
requires two different counts in the same range.

%%% Local Variables:
%%% mode: latex
%%% TeX-master: "main"
%%% End:

\section{Conclusion and Future Work}
\label{sec:conclusions}
We presented Array Folds Logic (AFL), which extends the quantifier-free theory of arrays with \emph{folding}, a well-known concept from functional languages. The extension allows us to express counting properties, occurring frequently in real-life programs. Additionally, AFL is able to concisely summarize loops with  internal branching and counting over arrays. We have analyzed the complexity of satisfiability checking for AFL formulas, and presented an efficient decision procedure via an encoding to the quantifier-free Presburger arithmetic. Finally, we have implemented a tool called \textsc{AFolder}, which efficiently discharges AFL proof obligations, and demonstrated its practical applicability on numerous examples.

For the future work, we plan to investigate
% possible extensions to the expressive power of the logic (such as the \emph{accumulation} properties), as well as on 
 possible combinations with other decidable fragments of the theory of arrays (to allow some restricted form of quantifier alternation). We also plan to automate the generation of proof obligations and the summarization of loops, and want to improve the efficiency of our decision procedure by implementing suitable optimizations and heuristics.

%In this paper we presented Array Folds Logic that extends the
%quantifier-free theory of arrays by the fold expression. Folding, a
%concept well-known from functional programming, allows our logic to
%express counting properties that cannot be handled by other array
%logics.  Our logic is especially useful for dealing with loops that
%traverse arrays and perform counting --- a task task is notoriously
%difficult for verification and testing tool.
%
%For future work, we plan to investigate automatic translation of program loops into fold expressions.

%%% Local Variables:
%%% mode: latex
%%% TeX-master: "main"
%%% End:

%\subsubsection*{Acknowledgments}

%\newpage
\bibliographystyle{abbrv}
\bibliography{ref}

\clearpage
\appendix

\section{Proof of Lemma~\ref{th:small}}
\label{sec:app_proofs}

\label{ref:proofsmall}
\begin{proof}
  One direction of the proof is trivial.

  For the other direction, assume that $\phi$ has a model $\model$.
  Let $X$ be the set of variables in $\phi$.
  W.l.o.g. we assume that all folds are of the form
  \[\Vect{out_1;\cdots;out_n}~=~\Fold_{a}\: \Vect{in_1;\cdots;in_n}\: F^m\]
  where $out_1,\ldots,out_n,in_1,\ldots,in_n$ are integer variables.

  From the model $\model$ of $\phi$ we build a conjunction $\psi$ of
  literals in the following way: for every atomic formula $\gamma$
  of the form $T=T, V^m=V^m$, or $T\leq T$ in $\phi$ we add a conjunct
  $\gamma$ to $\psi$ if $\model$ satisfies $\gamma$, and we add a
  conjunct $\neg \gamma$ otherwise.  Observe that $\model$ is a
  model of $\psi$ and every model of $\psi$ is a model of $\phi$. In
  the remaining part of the proof we show that $\psi$ has a small
  model.

  Let $s=|\psi|$, and note that $s\leq 3|\phi|$.  Moreover, we
  write $\psi = \psi_{f} \land \psi_{nf}$, where $\psi_{f}$ contains
  only literals with folds, and $\psi_{nf}$ contains only literals
  without folds.

  Let us assume that folds are over the same array $a$; we will later
  deal with this restriction.  Let
  $F = \{\Fold^1_{a},\ldots,\Fold^n_{a}\}$ be a set of folds in
  $\psi$ over the array $a$.  We translate each fold $\Fold^i_{a}\in F$ to a
  symbolic counter machine $\scm_i$.  Each $\scm_i$ has at most $s$
  transitions, and the sum of counters and reversal among all $\scm_i$
  is at most $s$.  Next, we create the product
  $\scm=(\ctrv, \states, \alphabet, \tr, \init)$ of all the folds in
  $F$.  The product counter machine $\scm$ has at most $k=s$ counters,
  $s^s$ states and transitions, and makes at most $r=s$ reversals.

  % There is mapping from the inital values of counters and the
  % variables in $Out$, and similarly, from the output value of
  % counters to the variables in $In$.  The
  % Model $\model$ assigns vectors $\vec{w}_i, \vec{w}_o\in \Nb^k$ of
  % initial values, and output values to the counters of $\scm$.

  In the following part of the proof, we extend the technique of
  \cite{Ibarra81} to show that there exist a sufficiently short run of
  $\scm$. Under the interpretation $\model$, all variables in counter
  constraints become constants.  Let $\vec{c}=(c_1, \ldots c_n)$ be a
  non-decreasing vector of constants that appear in the counter
  constraints of $\scm$ after fixing $\model$. Vector $\vec{c}$ gives
  rise to a set of regions
  \[\mathcal{R}= \{[0,c_1],[c_1,c_1],[c_1+1,c_2-1],[c_2,c_2],\ldots,[c_l,\infty]\}.\]
  The size of $\mathcal{R}$ is at most $2\dim(\vec{c})+1 \leq 3s$.  A
  mode of $\scm$ is a tuple in $\mathcal{R}^k$ that describes the
  region of each counter.  Let us observe that each counter can
  traverse at most $|\mathcal{R}|$ modes before it makes an
  additional reversal. Thus, $\scm$ in any run can traverse at most
  $max = r\cdot k\cdot|\mathcal{R}|\leq \mathcal{O}(s^3)$ different
  modes.

  Let $\vec{w}^{in}, \vec{w}^{out}\in \Nb^k$ be the vectors of the initial and
  final values of counters of $\scm$; these vectors are given by
  interpretation $\model$ of the initial and output variables of folds.
  We know that $\scm$ on input $\model(a)$ has a run
  $Tr=\delta_1,\ldots, \delta_n$, where $\delta_i\in \tr$, from an initial configuration
  $\conf_0=(\init, \vec{w}_{in})$ to a final configuration
  $\conf_n=(\cdot, \vec{w}_{out})$.

  We partition $Tr$ into sub-sequences $Tr_1, \ldots, Tr_{max}$, such
  that all transitions in $Tr_i$ are fired from a configuration in mode
  $i$, and for $i<max$ the last transition in $Tr_i$ leads to a
  configuration in mode $i+1$.  Let us look into some sub-run
  $Tr_i=\delta_l\ldots\delta_{l'}$. For each transition $\delta\in \tr$ we
  mark one occurrence of $\delta$ in $Tr_i$, provided that such
  transition occurs. In this way, we mark at most $|\tr|\leq s^{s}$
  transitions in $Tr_i$.

  Next, we identify sub-sequences $\rho=\delta_m,\ldots,\delta_n$ of $Tr_i$,
  such that 1) $m<n$, 2) $\delta_m=\delta_n$, 3) all transitions in
  $\rho$ are unmarked, and 4) all transitions
  $\delta_{m+1},\ldots,\delta_{n-1}$ are distinct. Observe that by
  deleting transitions $\delta_2,\ldots,\delta_l$ from $Tr_i$ we obtain
  a valid sequence of transition that ends in the same state, but may
  lead to different counter values. Let $\bar{Tr}_i$ be the sequence of
  transitions that results by repeatedly deleting all such sequences
  from $Tr_i$, and let $S_i$ be the multi-set of sequences deleted in
  such way. Since marked configurations remain in $\bar{Tr}_i$ (and
  there are at most $|\tr|$ of them), and there can be at most $|\tr|$
  remaining transition between any two marked configuration in
  $\bar{Tr}_i$, therefore $\bar{Tr}_i$ has at most
  $|\tr|(1+|\tr|)=\mathcal{O}(s^{3s})$ transitions.

  For each $S_i$ we define an equivalence relation $S^=_i$ on the
  deleted transitions as follows: two deleted sequences are equivalent
  if they 1) have the same starting transitions, 2) have the same end
  transitions, and 3) add/subtract the same value for each
  counter. Note that in $Tr_i$ a deleted sub-sequence can be
  substituted by an equivalent one, without changing the final
  configuration, i.e. the state and the value of counters remain the
  same.  Let $u$ be the maximum constant which is added/subtracted in
  a counter update; $u$ is given in binary, so $u \leq 2^{s}$.  There
  may be at most $|\tr|$ transitions in a deleted sequence, so the
  total net effect on a single counter may be at most
  $u|\tr|=s^s2^s$. The number of equivalence classes in $S^=_i$ is
  $|S^+_i| \leq u|\tr|^2 \leq s^{3s}2^s\leq \mathcal{O}(s^{4s})$.

  We construct an integer linear program $LP$, which expresses solutions to
  $\psi$ that lead to some run $Tr^*$ of $\scm$ such that 1)
  $Tr^*$ goes through the same modes as $Tr$, 2) $Tr^*$ has the same
  net effect on each counter, 3) $Tr^*$ is possibly shorter than $Tr$.
  The linear program has five parts
  $LP = LP_1 \land \ldots \land LP_5$.

  $\mathbf{LP_1}$ In $LP_1$ we specify constraints for the counter values.
  For every counter $1\leq j\leq k$ and mode $1\leq i\leq max$ we
  create variables $w_{i,j}$ and $z_{i,j}$ which describe the value of
  counter $j$ at the start and the end of mode $i$.  Thus, $z_{max,j}$
  is the value of counter $j$ after executing $Tr^*$, and $w_{1,j}$ is
  the initial value of the computation.
  
  For $1\leq j\leq k,1\leq i\leq max$, part $LP_1$ contains the following equations
  \[z_{i,j} = w_{i,j} + \bar{b}_{i,j} +
  \sum_{m=1}^{|S^=_i|}b_{i,j,m}~y_{i,m},\]
  where $\bar{b}_{i,j}\in \Nb$ is the net effect of $\bar{Tr}_i$ on
  counter $j$, $b_{i,j,m}\in \Nb$ is the net effect of the sequence in
  equivalence class $m$ on counter $j$, and variable $y_{i,m}$ denotes
  the number of times that the sequence from class $m$ occurs in mode
  $i$.

  Additionally, for $1\leq i< max-1$, we add to $LP_1$ the
  constraint:
  \[ w_{i+1,j} = z_{i,j} + b^+_{i,j}, \] % check this
  where $b^+_{i,j}$ is the effect of the last transition in $\bar{Tr}_i$
  on counter $j$.
  We denote by $Z,W,Y$ the sets of variables
  $z,w,y$, respectively.

  $\mathbf{LP_2}$ In $LP_2$ we specify the constrain for
  $\psi_{nf}$, which is the part of $\psi$ that does not contain
  folds.We create a copy $X'$ of all the variables in $X$. The linear program
  contains a formula
  \[ LP_2 = \psi_{nf}[X/X']. \]
  
  $\mathbf{LP_3}$ We link the start and end value of counters with
  variables in $X'$.  For a counter $1\leq j \leq k$ we add the
  following constraints to $LP_3$
  \[ z_{max,j} = out'_k \land w_{i,j} = in'_k, \]
  where $out_k, in_k$ are the variables that specify in $\psi$ the initial
  and output values of counter $j$ of $\scm$.

  $\mathbf{LP_4}$ W.l.o.g.\ we assume that in the folds of $\psi$
  counters are only compared to variables. As a result of our
  translation, the constants in counter constraints of $\scm$ arise
  from interpretation $\model$ of some variables in $X$.  Thus, we add
  the constraint that counter values stay within their modes. Suppose
  in mode $i$, counter $j$ is in the region specified by the interval
  $[x_k, x_l]$, where $x_k, x_l\in X$.  We add the following
  constraints to $LP_4$:
  \[ x'_l \leq w_{i,j} \leq x'_k \quad\land\quad x'_l \leq z_{i,j} \leq x'_k.\]
  
  $\mathbf{LP_5}$ Finally, we need to ensure that all input constraints
  executed in $Tr$ are satisfiable. Thus, let $\mathcal{I}$ be the set
  of input constraints executed in $Tr$, where $|\mathcal{I}|\leq
  s$.
  For every input constraint $\psi_i\in \mathcal{I}$ we create an new
  variable $v_i$ and add to $LP_5$ a linear constraint over $v^i$ that
  corresponds to $\psi_i$.  We denote by $V$ the set of $v_i$
  variables.

  The linear program $LP$ can be expressed in the form
  $\vec{A}\vec{x} \leq \vec{b}$, where $\vec{x}$ is a vector of
  variables in $X'\cup Z\cup W\cup Y\cup V$, and and
  $\vec{A}, \vec{b}$ are a matrix, and vector of integers. 
  To determine the dimension of $\vec{x}$, observe that the cardinalities
  of $X', Z, W, V$ are polynomial in $s$, while
  \[|Y|\leq \sum_{i=1}^{max}|S^=_i| \leq \mathcal{O}(s^{k_0s}),\]
  where $k_0$ is a fixed constant.  Thus,   $\dim(\vec{x})\leq \mathcal{O}(s^{k_0s})$. Each of the constraints
  in $LP_1,\ldots, LP_2$ adds a number of constrains polynomial in
  $s$, so $\dim(\vec{b})\leq \mathcal{O}(s^{k_1})$, where $k_1$ is a
  fixed constant.  Observe that all entries in $\vec{A}$ and $\vec{b}$
  are at most of the size $2^s$ -- the largest number we can encode in
  $\psi$.  As a result the size of $[\vec{A}\vec{b}]$ is at most
  $\mathcal{O}(s^{sk_2})$, where $k_2$ is a fixed constant.

  We know that there exists a solution to $\vec{x}$ that satisfies the
  linear program: this solution can be obtained be assigning
  appropriate variables to the values of $Tr$ and $\model$.  By
  \cite{papa81} we know that there also exists a solution $p$ to
  $\vec{x}$ to assigns to every variable an element with absolute
  value is at most $2^{s^{k_3}}$, for some fixed constant $k_3$ (to
  satisfy the requirements of \cite{papa81}, we transforms $LP$
  program to the standard form
  $\vec{A'}\vec{x'} = \vec{b'}, \vec{x}\geq 0$ -- this can be done by
  doubling the number of constraints and variables.  The solutions to
  $\vec{x'}$ can be mapped to the solutions in $\vec{x}$).

  From the solution $p$ we can construct a model $\model'$ of
  $\psi$ in the two steps.  First, for integer variables in $X$,
  assign the values of $X'$ under $p$. Clearly, each variable gets a value
  with a representation that is $\mathcal{O}(s^{k_3})$ bits.

  Second, to get an assignment $\model'$ to array $a$ we create a
  computation $Tr^*$ that has the same effect as $Tr$.  To obtain
  $Tr^*_{i}$, in every reduced sub-sequence $\bar{Tr}_i$ place
  $p(y_{i,m})$ copies of a sequence in the equivalence class $m$ of
  $S^=_i$ after the ``marked transition rule identical to the last
  transition rule in the sequence'' (see \cite{Ibarra81}).  Then
  $Tr^* = Tr^*_{1},\ldots,Tr^*_{max}$, and the length of $Tr^*$ is in
  $\mathcal{O}(s^{sk_4})$, where $k_4$ is a fixed constant.  The
  assignment to array $a$ is a sequence of integers which satisfy the
  input constraints that run $Tr^*$ reads. For each input constraint
  $\psi_i$ read by $Tr^*$, we may use a solution $p$ to $v_i$ as a
  concrete value. The assignment to array $a$ is of length
  $\mathcal{O}(s^{sk_4})$, and each element can be represented in
  at most $\mathcal{O}(s^{k_3})$ bits.  To complete the proof, let us
  observe that there can be at most $|\psi|$ arrays in the formula
  $\psi$, so having multiple arrays does not change the order of
  growth of $\model'$.  $\qed$

\end{proof}

\end{document}